 \theoremstyle{plain}
 \newdefinition{definition}{Definition}
\newtheorem{theorem}{Theorem}
  \newtheorem{proposition}{Proposition}
 \newdefinition{remark}{Remark}
 \newdefinition{property}{Property}
  \newdefinition{example}{Example}
\def\ps@pprintTitle{%
 \let\@oddhead\@empty
 \let\@evenhead\@empty
 \def\@oddfoot{\footnotesize\itshape
   \hfill \today}%
 \let\@evenfoot\@oddfoot}
\begin{document}
\setlength\bibsep{0pt}
\renewcommand*{\bibfont}{\footnotesize}

\begin{frontmatter}



\title{Spearman’s rho for zero-inflated count data: formulation and attainable bounds}


\author[TUe]{Jasper Arends}
\author[UWin]{Guanjie Lyu}
\author[UQ]{Mhamed Mesfioui}
\author[TUe]{Elisa Perrone*}
\author[ULB]{Julien Trufin}

\address[TUe]{Department of Mathematics and Computer Science, Eindhoven University of Technology, Groene Loper 5, 5612 AZ, Eindhoven, The Netherlands}
\address[UWin]{Department of Mathematics and Statistics, University of Windsor}
\address[UQ]{Department of Mathematics and Computer Science, University of Quebec in Trois-Riv\`eres, Canada}
\address[ULB]{Department of Mathematics, Universit\'e Libre de Bruxelles, Belgium}

\begin{abstract}
We propose an alternative formulation of Spearman’s rho for zero-inflated count data. The formulation yields an estimator with explicitly attainable bounds, facilitating interpretation in settings where the standard range $[-1,1]$ is no longer informative.
\end{abstract}

\begin{keyword}
Spearman's rho \sep bivariate zero-inflated count data \sep attainable bounds
\end{keyword}

\end{frontmatter}


\section{Introduction}
Measures of association play an important role in quantifying dependence between random variables in many applications. 
Among the various measures proposed in the literature \citep{Blomqvist1950, Goodman1954, Nelsen2006, Scarsini1984}, Kendall’s tau ($\tau$) \citep{Kendall_1938} and Spearman’s rho ($\rho_S$) \citep{Spearman1904} are particularly popular due to their simple interpretation and rank-based estimation.
However, difficulties arise when the variables under study contain discrete components. Standard estimators of $\tau$ and $\rho_S$ rely on ranking the observations, and commonly used tie-breaking corrections may perform poorly when ties are substantial. 
Beyond estimation, discreteness also affects interpretation. For continuously distributed random variables, both $\tau$ and $\rho_S$ take values in the full interval $[-1,1]$, where the extremes correspond to perfect negative and positive association \citep{Nelsen2006}. 
In contrast, when the variables have discrete components, the attainable range of $\rho_S$ is generally a strict sub-interval of $[-1,1]$, depending on the (unknown) marginal distributions \citep{Denuit2002, Neslehova2007}.
As a result, the numerical value of $\rho_S$ alone may be misleading: a small positive value can still correspond to a substantial level of association if the attainable range is narrow. This highlights the importance of deriving attainable bounds in order to ensure a meaningful interpretation of any estimator of $\rho_S$.

In this paper, we address these issues in the context of zero-inflated data, that is, data generated by random variables with an increased probability mass at zero. Such data arise in many applications, including insurance, health care, ecology, and atmospheric sciences \citep{Arab2012,Mesfioui2022a, Moulton1995,Pechon2021}.
Recent contributions have highlighted the need for tailored formulations and estimators of rank-based association measures in the presence of zero inflation. In particular, \cite{Pimentel2015}, \cite{Denuit2017}, and \cite{Arends2025} studied concordance-based measures for zero-inflated continuous data and derived corresponding estimators and attainable ranges for Kendall’s tau, Spearman’s rho, among other measures. More recently, \cite{Perrone2023} extended these ideas to zero-inflated count data, with a focus on Kendall’s tau.
A different and more general perspective is provided by \cite{Nasri2023}, who proposed an estimator of Spearman’s rho, among other measures, applicable to discrete, continuous, and mixed data, including high-dimensional settings. While their approach offers a broadly applicable estimation framework, it does not explicitly address the derivation of attainable bounds in the presence of zero inflation.
Building on \cite{Arends2025}, we propose an alternative formulation of Spearman’s rho for zero-inflated count data. Although zero-inflated count data are in fact discrete data, the additional probability mass at zero provides a structure that enables explicit derivations for the bounds that are generally not available for arbitrary discrete distributions. Our derived formulation allows for a clearer characterization of the behavior of rank-based association measures, facilitates comparisons with Kendall's tau derived in \cite{Perrone2023}, and leads to an estimator whose achievable bounds can be explicitly derived, thus improving interpretation in settings where the standard range $[-1,1]$ is no longer informative.
The development of these results in the remainder of the paper is organized as follows. Section~\ref{sec:back} reviews Spearman’s rho and introduces the required notation. Section~\ref{sec:results} presents the new formulation and the corresponding estimator, and derives the attainable bounds. Simulation and case studies, together with empirical analysis of the statistical properties of the estimator are reported in Section~\ref{sec:sim}. Section~\ref{sec:concl} concludes with a discussion of possible extensions.

\section{Background and notation}
\label{sec:back}

We briefly recall the definition of Spearman’s rho and summarize the results relevant for the present work. Spearman’s rank correlation, denoted by $\rho_S$, was introduced by \cite{Spearman1904} as a rank-based alternative to Pearson’s correlation coefficient. Let $(X_1, Y_1)$, $(X_2, Y_2)$ and $(X_3, Y_3)$ be three independent copies of the pair $(X, Y)$ with joint distribution function $H$, then $\rho_S$ quantifies the association in terms of the probabilities of concordance and discordance as
\begin{equation}\label{eq:spm_rho}
    \rho_S = 3 P\bigl((X_1 - X_2)(Y_1 - Y_3) > 0\bigr) - 3 P\bigl((X_1 - X_2)(Y_1 - Y_3) < 0\bigr).
\end{equation}
Observe that the random variables $X_3$ and $Y_2$ do not appear explicitly in this expression. Even so, under dependence, they may still affect the distributions of $Y_3$ and $X_2$, respectively, and hence indirectly contribute to the value of $\rho_S$.
By rearranging the probabilities in Eq.~\eqref{eq:spm_rho}, we obtain
\begin{equation}\label{eq:spm_rho_discr}
    \rho_S = 6 P\bigl((X_1 - X_2)(Y_1 - Y_3) > 0\bigr) - 3 P\bigl(X_1 = X_2 \text{ or } Y_1 = Y_3\bigr) - 3.
\end{equation}
If $(X, Y)$ is a continuous random vector, then the probability of ties is zero and only the remaining component affects $\rho_S$. However, as shown by Eq.~\eqref{eq:spm_rho_discr}, $\rho_S$ is heavily affected by the probability of ties if this is non-negligible, as in the case of zero-inflated data.

We introduce notation for zero-inflated random variables. Following \cite{Pimentel2015,Perrone2023,Arends2025}, let $X$ and $Y$ be two zero-inflated random variables with increased probability masses $p_1$ and $p_2$ at zero. Their cumulative distribution functions can then be written as $F(x) = p_1 + (1 - p_1) \tilde F(x)$, and $G(y) = p_2 + (1 - p_2) \tilde G(y)$, for all $x, y \geq 0$. Here, $\tilde F$ and $\tilde G$ are the original (not extra inflated) distribution functions. If $\tilde F$ and $\tilde G$ are continuous, then the probability mass at zero is only given by the extra mass $p_1$ and $p_2$. The focus of this paper, however, lies on zero-inflated discrete distributions, that is when $\tilde F$ and $\tilde G$ are discrete and have a strictly positive support in zero. Therefore, the probability mass around zero in such a case is $P(X = 0) = p_1 + (1 - p_1) \tilde F(0)$ and $P(Y = 0) = p_2 + (1 - p_2) \tilde G(0)$.

The idea underlying the proposed estimator is to separate contributions arising from observations at zero and from strictly positive observations. 
To this end, we consider the probability $p_{ab} = P\bigl(\mathbbm 1(X > 0) = a, \mathbbm 1(Y > 0) = b\bigr)$ for all $a, b \in \{0, 1\}$. Moreover, let $X_{10}$ ($X_{11}$) be the positive random variable with the distribution of $X$, given that $Y = 0$ ($Y > 0$), and define $Y_{01}$ and $Y_{11}$ analogously. Since the distributions of these random variables vary for different relations between $X$ and $Y$, we need to also consider the probabilities $p_1^* = P(X_{10} > X_{11})$, $p_1^\dagger = P(X_{10} = X_{11})$, $p_2^* = P(Y_{01} > Y_{11})$ and $p_2^\dagger = P(Y_{01} = Y_{11})$. 

We next recall existing results on Spearman's rho for zero-inflated continuous distributions. Let $S^* = \{X_1 > 0, Y_1 > 0, X_2 > 0, Y_3 > 0\}$. For zero-inflated continuous distributions, \cite{Arends2025} showed that Spearman's rho can be expressed by
\begin{equation}\label{eq:spm_czid}
    \rho_S = p_{11} (1 - p_1)(1 - p_2) \rho_{S^*} + 3 p_{11} \bigl(p_{10} (1 - 2p_1^*) + p_{01} (1 - 2p_2^*)\bigr) + 3 (p_{00} p_{11} - p_{01} p_{10}),
\end{equation}
where $\rho_{S^*}$ is the difference between the conditional probabilities of concordance and discordance given $S^*$, that is
\begin{align}\label{eq:spm_rho*}
    \rho_{S^*} = 3 P\bigl((X_1 - X_2)(Y_1 - Y_3) > 0\, |\, X_1 > 0, X_2, Y_1, Y_3 > 0 \bigr) - 3 P\bigl((X_1 - X_2)(Y_1 - Y_3) < 0\, |\, X_1, X_2, Y_1, Y_3 > 0\bigr).
\end{align}
\cite{Denuit2017} examined two dependence structures where the supports of $X_{11}$ and $X_{10}$ (similarly for $Y_{11}$ and $Y_{01}$) are disjoint. This highlights the importance of distinguishing the cases where $X_3$ and $Y_2$ are zero or strictly positive as well, despite their absence in Eq.~\eqref{eq:spm_rho*}. This led \cite{Arends2025} to the formulation of $\rho_{S^*}$ as
\begin{align}\label{eq:rhoS_pos}
    (1 - p_1)(1 - p_2) \rho_{S^*} = p_{11}^2 \rho_{S_{11}} + p_{11} p_{10} \rho_{S_{10}} + p_{11} p_{01} \rho_{S_{01}} + p_{10} p_{01} \rho_{S_{00}},
\end{align}
where $\rho_{S_{ab}}$ is Spearman's rho conditional on $S_{ab} = S^* \cap \{\mathbbm 1 (X_3 > 0) = a, \mathbbm 1(Y_2 > 0) = b\}$ for $a, b \in \{0, 1\}$. For example, taking $a = 0$ and $b = 0$ here yields $S_{00} = \{X_1 > 0, Y_1 > 0, X_2 > 0, Y_2 = 0, X_3 = 0, Y_3 > 0\}$, and hence $X_2$, $Y_3$ are distributed as $X_{10}$, $Y_{01}$ respectively. Based on this representation, they proposed an estimator of Spearman's rho by replacing each term in Eq.~\eqref{eq:spm_czid} and Eq.~\eqref{eq:rhoS_pos} by their relative frequencies and estimating $\rho_{S_{11}}$ using the classical Spearman's rho computed from observations with both $X>0$ and $Y>0$. In the next section, we first derive a new estimator for Spearman's rho for zero-inflated count data, and then its attainable bounds.

\section{Estimator and attainable bounds}
\label{sec:results}

\subsection{Estimator of Spearman's rho for zero-inflated count data}
\label{sec:est}

The derivation of the new estimator follows the general approach adopted in \cite{Pimentel2015,Perrone2023,Arends2025}, whereby the association measure is decomposed into contributions arising from observations at zero and from strictly positive observations. The resulting expression is stated in the following theorem.
\begin{theorem}\label{thm:spm_dzid}
    For bivariate zero-inflated count data, Spearman's rho is given by
    \begin{equation}\label{eq:spm_dzid}
        \rho_S = p_{11} (1 - p_1)(1 - p_2)\rho_{S^*} + 3 p_{11} \Bigl(p_{10} \bigl(1 - 2 p_1^* - p_1^\dagger\bigr) + p_{01} \bigl(1 - 2 p_2^* - p_2^\dagger\bigr)\Bigr) + 3 (p_{00} p_{11} - p_{01} p_{10}).
    \end{equation}
\end{theorem}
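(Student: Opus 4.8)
The plan is to start from the concordance form in Eq.~\eqref{eq:spm_rho}, writing $\rho_S = 3\bigl(P(C) - P(D)\bigr)$ with $C = \{(X_1-X_2)(Y_1-Y_3) > 0\}$ and $D = \{(X_1-X_2)(Y_1-Y_3) < 0\}$, and to expand $P(C)-P(D)$ by the law of total probability over the zero/positive pattern $\bigl(\mathbbm 1(X_1>0),\mathbbm 1(X_2>0),\mathbbm 1(Y_1>0),\mathbbm 1(Y_3>0)\bigr)$ of the four variables entering the product. Since the copies $(X_1,Y_1),(X_2,Y_2),(X_3,Y_3)$ are independent, these patterns factorize across copies. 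Whenever a factor must be compared against another strictly positive value, I would further condition on its companion indicator within the same copy ($\mathbbm 1(Y_1>0)$ for $X_1$, $\mathbbm 1(Y_2>0)$ for $X_2$, $\mathbbm 1(X_1>0)$ for $Y_1$, $\mathbbm 1(X_3>0)$ for $Y_3$) so as to identify the correct conditional law among $X_{10},X_{11},Y_{01},Y_{11}$.

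I would then group the configurations according to whether each of the comparisons $X_1-X_2$ and $Y_1-Y_3$ is sign-forced by the indicators (one factor zero, one positive), value-dependent (both positive), or a tie (both zero, or---new to the count setting---two equal positive values). A tie in either comparison makes the product vanish and contributes nothing to $P(C)-P(D)$. The configurations in which both comparisons are value-dependent constitute exactly the event $S^*$; there, since $\rho_{S_{ab}}=3P(C\mid S_{ab})-3P(D\mid S_{ab})$ and $\{S_{ab}\}$ partitions $S^*$, one gets $3P(C\cap S^*)-3P(D\cap S^*)=\sum_{a,b}P(S_{ab})\,\rho_{S_{ab}}=p_{11}\rho_{S^*}$, reproducing the first term via the stated definition of $\rho_{S^*}$ and the identity $P(S_{ab})=p_{11}\times(\text{its weight in }\rho_{S^*})$.

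The heart of the argument is the mixed type, in which one comparison is value-dependent and the other sign-forced. Take $X_1-X_2$ value-dependent and $Y_1-Y_3$ forced, and combine the two sub-configurations in which $(X_1,Y_1)$ is of type $(1,1)$, $(X_2,Y_2)$ of type $(1,0)$ with $Y_3=0$, and in which $(X_1,Y_1)$ is of type $(1,0)$, $(X_2,Y_2)$ of type $(1,1)$ with $Y_3>0$. In both the value comparison is between $X_{11}$ and $X_{10}$, and, once the forced sign of $Y_1-Y_3$ is accounted for, its net concordance-minus-discordance contribution equals
\begin{equation*}
P(X_{11}>X_{10})-P(X_{11}<X_{10})=(1-p_1^*-p_1^\dagger)-p_1^*=1-2p_1^*-p_1^\dagger .
\end{equation*}
This identity is exactly where the tie probability $p_1^\dagger$ enters, and it is the only place the count setting departs from Eq.~\eqref{eq:spm_czid}. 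The two sub-configurations carry weights $p_{11}p_{10}\,P(Y_3=0)$ and $p_{11}p_{10}\,P(Y_3>0)$, which sum to $p_{11}p_{10}$; the symmetric treatment of $Y_1-Y_3$ gives $p_{11}p_{01}(1-2p_2^*-p_2^\dagger)$. Finally, summing the four deterministic--deterministic sign patterns and reducing the resulting cubic in the $p_{ab}$ by means of $\sum_{a,b}p_{ab}=1$ collapses it to $p_{00}p_{11}-p_{01}p_{10}$. Collecting the three contributions yields Eq.~\eqref{eq:spm_dzid}.

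I expect the main obstacle to be the bookkeeping across the mixed and deterministic--deterministic types rather than any single estimate: one must attach the right conditional law ($X_{10}$ versus $X_{11}$, $Y_{01}$ versus $Y_{11}$) to every positive factor, track which copy supplies each variable, and verify the cancellations that produce the compact coefficients---the vanishing of like-law comparisons by exchangeability, the telescoping $P(Y_3=0)+P(Y_3>0)=1$, and the degree reduction to $p_{00}p_{11}-p_{01}p_{10}$.
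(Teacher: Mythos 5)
Your proposal is correct and follows essentially the same route as the paper's proof: a total-probability decomposition over the positivity pattern of $(X_1,X_2,Y_1,Y_3)$, identification of the conditional laws $X_{10},X_{11},Y_{01},Y_{11}$ by further conditioning on companion indicators, the reduction of the fully sign-forced terms to $p_{00}p_{11}-p_{01}p_{10}$, and the key identity $P(X_{11}>X_{10})-P(X_{11}<X_{10})=1-2p_1^*-p_1^\dagger$ for the mixed configurations. The cancellations you flag (like-law comparisons by exchangeability, the telescoping $p_{+0}+p_{+1}=1$) are exactly the ones the paper exploits in its $A_X$, $A_Y$ computation.
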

\begin{proof}
    The proof is delegated to the Appendix.
\end{proof}

The expression reported in Eq.~\eqref{eq:spm_dzid} applies to both zero-inflated continuous and zero-inflated discrete distributions. 
In the continuous case, it reduces to expression obtained in \cite{Arends2025} reported in Eq.~\eqref{eq:spm_czid}, since the probabilities of ties away from zero, $p_1^\dagger$ and $p_2^\dagger$, vanish.

An estimator of Spearman’s rho can be constructed by replacing each component in Eq.~\eqref{eq:spm_dzid} with its empirical counterpart. In analogy with \cite{Arends2025}, $\rho_{S_{11}}$ is estimated using the classical Spearman’s rho computed from observations with $X>0$ and $Y>0$, with ties handled by average ranks. The resulting estimator is denoted by $\hat{\rho}_A$. 
The components estimated by relative frequencies are maximum likelihood estimates (MLE) and hence satisfy standard asymptotic properties. However, the proposed estimator of Spearman's rho based on Eq.~\eqref{eq:spm_dzid} is not itself a full MLE, because the component $\rho_{S_{11}}$, i.e., Spearman's rho for the strictly positive observations, is estimated by the standard rank-based approach, namely Pearson correlation of the average ranks in the presence of ties. 
When the probabilities of inflation are large, the contribution of $\hat \rho_{S_{11}}$ to the overall estimator becomes limited, so asymptotic normality of the full estimator remains plausible. Due to the complicated dependence structure between $\hat \rho_{S_{11}}$, $\hat \rho_{S_{10}}$, and the other estimated components, we use bootstrap to assess the estimator's asymptotic behavior. We discuss this aspect in Section~\ref{sec:sim}.
For completeness, we note that a more general estimator of $\rho_S$ applicable to arbitrary distributions has been proposed by \cite{Nasri2023}. A numerical comparison between the two estimators is also reported in Section~\ref{sec:sim}.

\subsection{Attainable bounds for Spearman's rho}
\label{sec:bounds}

For continuous distributions, $\rho_S$ takes values in the interval $[-1,1]$, with the extremes corresponding to the strongest negative and positive association \citep{Nelsen2006}. When the variables include discrete components, however, the attainable range of $\rho_S$ is generally a strict subinterval of $[-1,1]$ and depends on the marginal distributions \citep{Neslehova2007}. As a consequence, the numerical value of $\rho_S$ alone may be misleading, and an explicit characterization of its attainable bounds is required for meaningful interpretation.

For general discrete random variables, sharp bounds for $\rho_S$ were derived by \cite{Mesfioui2022c}, but their expressions involve several expectations depending on the marginal distributions, which limits their practical applicability. Here, we derive more explicit bounds tailored to zero-inflated count data by exploiting the formulation in Eq.~\eqref{eq:spm_dzid}.
Sharp bounds for $\rho_S$ and $\tau$ in the zero-inflated continuous setting were obtained by \cite{Denuit2017} and \cite{Arends2025}. The zero-inflated discrete case of $\tau$ was addressed in \cite{Perrone2023}. These results rely on the theory of copulas and the fact that extremal values are attained under the Fréchet--Hoeffding copula bounds. By Sklar’s theorem \citep{Sklar1959}, any joint distribution with margins $F$ and $G$ can be written as $H(x,y)=C(F(x),G(y))$, where $C$ is a copula \citep{Nelsen2006}. The attainable bounds of $\rho_S$ are therefore obtained by evaluating Eq.~\eqref{eq:spm_dzid} under the Fréchet--Hoeffding copulas given by $M(u,v)=\min\{u,v\}$ and $W(u,v)=\max\{u+v-1,0\}$.

Before stating the bounds, we introduce additional notation. For the upper bound, let $\tilde s, \tilde u, \tilde t, \tilde v \in \mathbb N$ be the points such that
\begin{alignat*}{3}
    &G(\tilde t-1) \leq p_1 < G(\tilde t), &&F(\tilde v-1) < G(\tilde t) \leq F(\tilde v),\qquad && \text{if } p_1 \leq p_2, \\
    &F(\tilde s-1) \leq p_2 < F(\tilde s),\qquad &&G(\tilde u-1) < F(\tilde s) \leq G(\tilde u),\qquad && \text{if } p_1 > p_2.
\end{alignat*}
We recall from \cite{Perrone2023} that $\tilde s$ is the point that is attainable by both $X_{10}$ and $X_{11}$, similarly for $\tilde t$ with respect to $Y_{01}$ and $Y_{11}$. For the lower bound, define $\tilde s', \tilde u', \tilde t', \tilde v' \in \mathbb N$ to be the points that satisfy
\begin{alignat*}{2}
    &F(\tilde s'-1) \leq 1 - p_2 < F(\tilde s'), \qquad && G(\tilde u'-1) \leq 1 - F(\tilde s'-1) < G(\tilde u'), \\
    &G(\tilde t'-1) \leq 1 - p_1 < G(\tilde t'), \qquad && F(\tilde v'-1) \leq 1 - G(\tilde t'-1) < F(\tilde v'),
\end{alignat*}
when $p_1 + p_2 < 1$. Additionally, let $I_1 = \min\{1 - p_2, F(\tilde v')\}$ and $I_2 = \min\{1 - p_1, G(\tilde u')\}$. Now, the sharp upper and lower bounds on $\rho_S$ can be formulated as follows.

\begin{proposition}\label{prop:bounds}
    Let $\rho_{S_{11}}^{\max}$ and $\rho_{S_{11}}^{\min}$ be sharp upper and lower bounds on Spearman's rho on the random variables that are positive. Then, the sharp upper and lower bounds on $\rho_S$ for a pair of zero-inflated discrete random variables are given by
    \begin{equation*}
        \rho_S^{\max} = \begin{cases}
            (1 - p_1)^3 \rho_{S_{11}}^{\max} + 3 p_1 (1 - p_1) + 3 \bigl(p_1 - G(\tilde t-1)\bigr) \Bigl(G(\tilde t) \bigl(p_1 - F(\tilde v) - F(\tilde v-1)\bigr) + F(\tilde v) F(\tilde v-1)\Bigr), & \text{if } p_1 \leq p_2, \\
            (1 - p_2)^3 \rho_{S_{11}}^{\max} + 3 p_2 (1 - p_2) + 3 \bigl(p_2 - F(\tilde s-1)\bigr) \Bigl(F(\tilde s) \bigl(p_2 - G(\tilde u) - G(\tilde u-1)\bigr) + G(\tilde u) G(\tilde u-1)\Bigr), &\text{if } p_1 > p_2,
        \end{cases}
    \end{equation*}
    and
    \begin{equation*}
        \rho_S^{\min} = \begin{cases}
            -3 (1 - p_1)(1 - p_2), \quad \text{if } p_1 + p_2 \geq 1, \\
            w(0, 0)^3 \rho_{S_{11}}^{\min} + 3 w(0, 0) (p_1 p_2 - p_1 - p_2) - 3 p_1 p_2 + w(\tilde s'-1, \tilde u'-1) w(\tilde s', 0) w(0, \tilde t') \mathbbm 1(\tilde s' = \tilde v') + \\
            \quad - 3 w(\tilde s', 0) \left[p_2 w(\tilde s'-1, 0) + \bigl(G(\tilde u'-1) - p_2\bigr)^2 + w(\tilde s'-1, \tilde u'-1) \bigl(I_2 - 2 p_2 + G(\tilde u'-1)\bigr)\right] \\
            \quad - 3 w(0, \tilde t') \left[p_1 w(0, \tilde t'-1) + \bigl(F(\tilde v'-1) - p_1\bigr)^2 + w(\tilde v'-1, \tilde t'-1) \bigl(I_1 - 2 p_1 + F(\tilde v'-1)\bigr)\right], \quad \text{if } p_1 + p_2 < 1,
        \end{cases}
    \end{equation*}
    where $w(\cdot, \cdot)$ is the function defined by $w(x, y) = 1 - F(x) - G(y)$ for $x, y \in \mathbb N$.
\end{proposition}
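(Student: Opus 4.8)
The plan is to combine Theorem~\ref{thm:spm_dzid} with the extremal property of the Fréchet--Hoeffding copulas. Since $\rho_S$ is a measure of concordance, it is monotone with respect to the concordance order, so over all joint distributions with the fixed zero-inflated count margins $F$ and $G$ it attains its maximum at the comonotone copula $M$ and its minimum at the countermonotone copula $W$ \citep{Nelsen2006}. Both $M$ and $W$ are genuine copulas for any margins, so the extremal values are actually attained; this delivers sharpness for free. The bounds $\rho_S^{\max}$ and $\rho_S^{\min}$ are therefore obtained by evaluating the right-hand side of Eq.~\eqref{eq:spm_dzid} term by term under $C=M$ and $C=W$, respectively. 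A key preliminary observation is that conditioning a comonotone (countermonotone) vector on the rectangle $\{X>0,Y>0\}$ leaves it comonotone (countermonotone), so that the factor $\rho_{S_{11}}$ appearing in $\rho_{S^*}$ is automatically replaced by the sharp one-sided bound $\rho_{S_{11}}^{\max}$ (respectively $\rho_{S_{11}}^{\min}$) of the strictly positive part; no separate optimization of that block is needed.

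For the upper bound I would first evaluate the indicator probabilities under $M$. Because $\{X=0\}$ and $\{Y=0\}$ become nested under comonotonicity, exactly one of $p_{01},p_{10}$ vanishes, and which one does so is governed by the sign of $p_1-p_2$; this produces the two cases of the statement. With one cross term gone, $\rho_{S^*}$ collapses to the $\rho_{S_{11}}$ block plus a single surviving $\rho_{S_{10}}$ (or $\rho_{S_{01}}$) block, and the summand $3(p_{00}p_{11}-p_{01}p_{10})$ simplifies accordingly. The remaining work is to express the surviving conditional quantities---the probabilities $p_1^*,p_1^\dagger$ (or $p_2^*,p_2^\dagger$) and the conditional $\rho_{S_{10}}$---in closed form. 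Under comonotonicity these are determined by where the excess zero mass of one margin is mapped into the support of the other, which is precisely what the transition points $\tilde t,\tilde v$ (for $p_1\le p_2$) and $\tilde s,\tilde u$ (for $p_1>p_2$) encode, following the matching already used for $\tau$ in \citep{Perrone2023}. Substituting these and collecting terms should reproduce the stated expression, with the atom sitting at the crossing point accounting for the factors such as $p_1-G(\tilde t-1)$ and the products of $F$- and $G$-values.

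The lower bound splits according to whether $\{X>0,Y>0\}$ can occur under $W$. If $p_1+p_2\ge 1$ the countermonotone coupling forces $p_{11}=0$; every term carrying a factor $p_{11}$ drops out and $\rho_S$ reduces to $3(p_{00}p_{11}-p_{01}p_{10})=-3(1-p_1)(1-p_2)$, giving the first line. When $p_1+p_2<1$ one must evaluate all four blocks under $W$. Here the combinatorics is heavier: countermonotonicity matches the zero mass of $X$ against the \emph{upper} tail of $Y$ and the zero mass of $Y$ against the upper tail of $X$, so both cross terms survive and interact. The primed transition points $\tilde s',\tilde u',\tilde t',\tilde v'$ locate where these tails meet the respective zero masses, the truncations $I_1,I_2$ cut off the overlapping mass, the function $w(x,y)=1-F(x)-G(y)$ records the joint upper-tail probabilities that recur throughout, and the indicator $\mathbbm 1(\tilde s'=\tilde v')$ captures the degenerate case in which the two crossings coincide at a single atom. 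Expressing $p_1^*,p_1^\dagger,p_2^*,p_2^\dagger$ and the conditional rhos through these quantities and simplifying then yields the second line.

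I expect the countermonotone (lower-bound) computation to be the main obstacle. Unlike the comonotone case, neither cross term vanishes, the two margins' zero masses are matched simultaneously against opposite tails, and the discreteness forces careful splitting of the boundary atoms---this is exactly the source of the tie probabilities $p_i^\dagger$, the truncations $I_1,I_2$, and the coincidence indicator $\mathbbm 1(\tilde s'=\tilde v')$. Verifying that this bookkeeping is exhaustive and non-overlapping, so that the collected terms are precisely the claimed ones with the correct treatment of the crossing atoms, is where the care is required; by contrast, the appeal to concordance monotonicity and the inheritance of comonotonicity/countermonotonicity on the positive subregion are routine.
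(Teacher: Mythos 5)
Your plan coincides with the paper's own proof: both evaluate Eq.~\eqref{eq:spm_dzid} under the Fr\'echet--Hoeffding joint distributions $M(F,G)$ and $W(F,G)$ (the paper cites \cite{Mesfioui2005} for extremality where you invoke concordance-order monotonicity and inheritance of comonotonicity on $\{X>0,Y>0\}$), obtain the same case splits from which of $p_{00},p_{01},p_{10},p_{11}$ vanish, and reduce everything to the crossing atoms located by $\tilde s,\tilde u,\tilde t,\tilde v$ and their primed counterparts. Note, however, that what you summarize as ``substituting and collecting terms'' is where essentially all of the appendix's work lies---deriving the conditional pmfs of $X_{11},X_{10},Y_{11},Y_{01}$, the joint atoms such as $P(X=\tilde s',\,Y=y\mid X>0,Y>0)$, and telescoping the resulting sums---so your outline is a correct roadmap of the paper's argument rather than a completed proof.
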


The bounds given here are consistent with the ones for zero-inflated continuous settings studied by \cite{Arends2025}. In particular, the bounds can be simplified to $\rho_S^{\max} = 1 - \max\{p_1, p_2\}^3$ and $\rho_S^{\min} = p_1^3 + p_2^3 - 1$ if $p_1 + p_2 \geq 1$, since for $\tilde F$ and $\tilde G$ continuous, one obtains $G(\tilde t) = p_1$, $F(\tilde s) = p_2$ for the upper bound and $G(\tilde t') = G(\tilde t'-1) = 1 - p_1$, $F(\tilde s') = F(\tilde s'-1) = 1 - p_2$ and in turn $G(\tilde u') = G(\tilde u'-1) = p_2$ and $F(\tilde v') = F(\tilde v'-1) = p_1$.

The bounds in Proposition~\ref{prop:bounds} can be estimated using the empirical distributions of $X$ and $Y$ and using the relaxation $-1 \leq \rho_{S_{11}} \leq 1$. We will briefly investigate the performance of these estimators in a simulation study in Section~\ref{sec:sim}. 

\section{Simulation study and empirical application}
\label{sec:sim}

\subsection{Simulation study}
We now perform a simulation study to evaluate the effectiveness of the proposed theory, and compare our estimator, $\hat \rho_A$, to that of \cite{Nasri2023}, $\hat \rho_N$. Here, we report the analysis of the zero-inflated discrete distributions based on a similar setting studied by \cite{Perrone2023}. We compute the values of the estimators of $\rho_S$ for $N$ pairs generated from two random variables joined through the Fr\'echet copula $C_\alpha(u, v) = (1 - \alpha) u v + \alpha \min\{u, v\}$, where $u, v, \alpha \in [0, 1]$ \citep{Nelsen2006}. We compare the derived estimates $\hat \rho_A$ and $\hat \rho_N$ with the true value of $\rho_S$ calculated as proposed by \cite{Safari2020}, who expressed $\rho_S$ in terms of the margins and joint distribution. An implementation of their formula is made available by \cite{Greef2020}. 
As the estimator varies for different margins, we selected multiple scenarios representative of various characteristics of the samples. 

Namely, we illustrate the performance of our estimator for Poisson margins, with parameters $\lambda_F, \lambda_G$. To assess the finite-sample performance and empirical statistical properties of the estimator, we consider $p_1,p_2 \in \{0.2,0.8\}$ for the zero-inflation probabilities, $\lambda_F,\lambda_G \in \{2,8\}$ for the Poisson margins, and $\alpha \in \{0.2,0.5,0.8\}$ for the dependence level. Smaller values of $\lambda_F$ and $\lambda_G$ induce more discreteness away from zero, while for value $8$ this effect is nearly negligible. We also vary the sample size over $N \in \{100,150,250,500,1000\}$ to study the estimator’s behavior in larger samples. Finally, for $N=150$, we use 1000 bootstrap resamples to examine the estimation of its standard deviation. An implementation of the simulation study in \texttt{R} \citep{RCore_2025} is available on GitHub at \url{https://github.com/JasperArends/SpmZICD}.

Figure~\ref{fig:simulation} presents boxplots for a representative subset of the simulation results. The complete numerical results are reported in the supplementary material. In particular, Table~S.1 reports the mean, the estimated mean squared error multipled by a factor of 100 (MSE*), and the estimated standard deviation across the simulations for the two estimators of $\rho_S$. Overall, our estimator performs comparably to the estimator proposed by \cite{Nasri2023}, which supports Theorem~\ref{thm:spm_dzid}. Moreover, the estimated standard deviations decrease substantially as the sample size increases.
Table~S.2 further shows that the bootstrap estimate of the standard deviation closely matches the corresponding Monte Carlo estimate on average. In addition, the QQ-plots in Figure~S.1 suggest that a normal approximation for the estimator is generally reasonable.
Finally, Table~S.3 reports the simulation results for the estimation of the attainable bounds. The bounds are generally estimated accurately across the considered parameter configurations. Small deviations may occur when the level of zero-inflation is low, which can be attributed to the relaxation $-1 \leq \rho_{S_{11}} \leq 1$ and to the non-differentiability of the bounds with respect to the margins, for example through constraints such as $p_{11} \leq 1 - \min\{p_1,p_2\}$.
Table~S.3 also illustrates how the attainable bounds influence the interpretation of the estimates. For instance, when $p_1=p_2=0.8$ and $\lambda_F=\lambda_G=2$ with dependence level $\alpha=0.8$, the mean estimate of Spearman’s rho equals $0.35$, while the corresponding upper bound is $0.40$. Relative to this attainable bound, the estimated value indicates a strong level of association.

\begin{figure}[t]
    \centering
    \includegraphics[width=.22\linewidth]{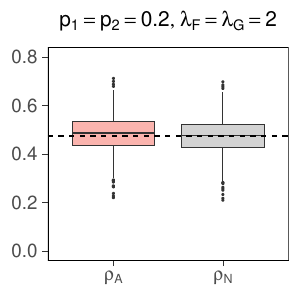}
    \includegraphics[width=.22\linewidth]{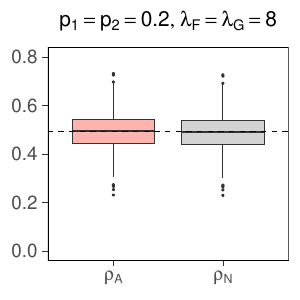}
    \includegraphics[width=.22\linewidth]{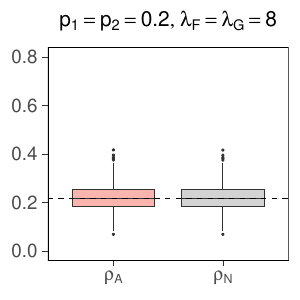}
    \includegraphics[width=.22\linewidth]{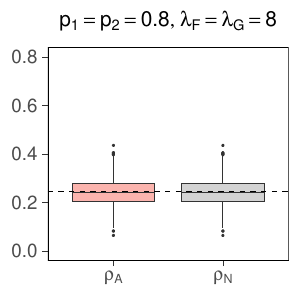} \vspace{-.5cm}
    \caption{Simulation results for $\alpha = 0.5$ and $N = 150$ samples, based on 1000 repetitions.}
    \label{fig:simulation}
\end{figure}

\subsection{Case study}
In this section, we illustrate how the proposed theory can be applied to real data with zero-inflated count variables. To this end, we consider the insurance data set of \cite{Guillen2021}, which contains information on homeowner and motor insurance claims. The data tracks 40,284 policyholders in Spain over the period from 2010 until 2014. For each customer, the numbers of claims are aggregated over time, yielding the total number of claims for each insurance type. Zero-inflation arises naturally because most customers do not experience an accident and therefore file no claim. In particular, the estimated probabilities of observing zero claims equal $\hat p_M = 0.94$ for motor insurance and $\hat p_H = 0.91$ for homeowner insurance.

We estimate Spearman's rho using both the estimator proposed in Section~\ref{sec:est} and the estimator of \cite{Nasri2023}, and both yield the same value, namely $\hat\rho_S = 0.0031$. Based on the estimated zero-inflation probabilities and the empirical margins, the attainable bounds are estimated as $-0.016 \leq \hat\rho_S \leq 0.17$. Relative to these bounds, the level of association is somewhat stronger than the raw estimate alone may suggest, although it remains small in absolute terms.

\section{Conclusion}
\label{sec:concl}
In this paper, we derived an alternative formulation of Spearman’s rho for bivariate zero-inflated count data and obtained explicit attainable bounds. The proposed estimator and its bounds are easily computed from the data, allowing for a meaningful interpretation of the strength of association when the standard range $[-1,1]$ is no longer informative.
In future research, we plan to use the new expression of $\rho_S$, and its counterpart for $\tau$ from \cite{Pimentel2015,Perrone2023}, to analyze the relationship between the two association measures for zero-inflated data, thereby extending known results for the continuous case \citep{Schreyer2017}. This was already briefly investigated by \cite{Arends2025b} for zero-inflated continuous distributions. Moreover, \cite{Mesfioui2022b} established bounds on a multivariate extension of Spearman's rho in case of zero-inflated continuous distributions, however these are not sharp. It would be of interest to investigate whether the approach we propose here can be adapted to derive sharper bounds for multivariate extensions of Spearman’s rho. \newline

\noindent \textbf{Supplementary material} \\
In the supplementary material, we provide the complete proofs of Theorem~\ref{thm:spm_dzid} and Proposition~\ref{prop:bounds}, alongside with the tables and plots discussed in Section~\ref{sec:sim}.

\bibliography{References_SPL}

\section*{Supplementary material}

\setcounter{section}{0}
\setcounter{table}{0}
\setcounter{figure}{0}

\renewcommand{\thesection}{S.\arabic{section}}
\renewcommand\thetable{S.\arabic{table}}
\renewcommand\thefigure{S.\arabic{figure}}

In the following, we present the supplementary material, including the complete proofs of Theorem 1 and Proposition 1, and the results of the simulation study discussed in Section 4 of the manuscript.

\section{Proof of Theorem 1}
Here, we present the complete proof of Theorem~1. The idea is to separate all cases where $X$ and $Y$ are zero or strictly positive. To that end, let
\begin{align*}
    C_{a_1 b_1 a_2 b_2} = \bigl\{&\mathbbm 1(X_1 > 0) = a_1, \mathbbm 1(Y_1 > 0) = b_1, \mathbbm 1(X_2 > 0) = a_2, \mathbbm 1(Y_3 > 0) = b_2\bigr\},
\end{align*}
for $a_1, b_1, a_2, b_2 \in \{0, 1\}$, and observe that $S^* = C_{1111}$. To simplify notation, we define $p_{0+} = P(X = 0)$, $p_{1+} = P(X > 0)$, $p_{+0} = P(Y = 0)$ and $p_{+1} = P(Y > 0)$. We note that it is possible to write these terms with respect to $p_1$ and $p_2$. However, this notation highlights the dependency on each variable more. We consider $\rho_S$ written as follows
\begin{align*}
    \rho_S = 3 \bigl(P(X_1 < X_2, Y_1 < Y_3) + P(X_1 > X_2, Y_1 > Y_3) - P(X_1 < X_2, Y_1 > Y_3) - P(X_1 > X_2, Y_1 < Y_3)\bigr).
\end{align*}
We now analyze each of these terms individually and write them as
\begin{align*}
    P(X_1 < X_2, Y_1 < Y_3) &= P(X_1 < X_2, Y_1 < Y_3 \, |\,  C_{1111}) p_{11} p_{1+} p_{+1} \\
    &\qquad+ P(Y_1 < Y_3 \, |\,  C_{0111}) p_{01} p_{1+} p_{+1} \\
    &\qquad+ P(X_1 < X_2 \, |\,  C_{1011}) p_{10} p_{1+} p_{+1} \\
    &\qquad+ p_{00} p_{1+} p_{+1}, \\
    P(X_1 > X_2, Y_1 > Y_3) &= P(X_1 > X_2, Y_1 > Y_3 \, |\,  C_{1111}) p_{11} p_{1+} p_{+1} \\
    &\qquad+ P(Y_1 > Y_3 \, |\,  C_{1101}) p_{11} p_{0+} p_{+1} \\
    &\qquad+ P(X_1 > X_2 \, |\,  C_{1110}) p_{11} p_{1+} p_{+0} \\
    &\qquad+ p_{11} p_{0+} p_{+0}, 
\end{align*}
\begin{align*}    
    P(X_1 < X_2, Y_1 > Y_3) &=  P(X_1 < X_2, Y_1 > Y_3 \, |\,  C_{1111}) p_{11} p_{1+} p_{+1} \\
    &\qquad+ P(Y_1 > Y_3 \, |\,  C_{0111}) p_{01} p_{1+} p_{+1} \\
    &\qquad+ P(X_1 < X_2 \, |\,  C_{1110}) p_{11} p_{1+} p_{+0} \\
    &\qquad+ p_{01} p_{1+} p_{+0}, \\
    P(X_1 > X_2, Y_1 < Y_3) &= P(X_1 > X_2, Y_1 < Y_3 \, |\,  C_{1111}) p_{11} p_{1+} p_{+1} \\
    &\qquad+ P(Y_1 < Y_3 \, |\,  C_{1101}) p_{11} p_{0+} p_{+1} \\
    &\qquad+ P(X_1 > X_2 \, |\,  C_{1011}) p_{10} p_{1+} p_{+1} \\
    &\qquad+ p_{10} p_{0+} p_{+1}.
\end{align*}
We collect the last terms that can all be expressed in terms of $p_{11}$, $p_{10}$, $p_{01}$, $p_{00}$, $p_1$ and $p_2$. \cite{Pimentel2009} used the margins of $p_{0+} = p_{01} + p_{00}$, $p_{+0} = p_{10} + p_{00}$, $p_{1+} = p_{11} + p_{10}$ and $p_{+1} = p_{11} + p_{01}$ to rewrite them as follows
\begin{align*}
        p_{00} p_{1+}  & p_{+1} + p_{11} p_{0+} p_{+0} - p_{01} p_{1+} p_{+0} - p_{10} p_{0+} p_{+1} \\
    &= p_{00} (p_{10} + p_{11}) (p_{01} + p_{11}) + p_{11} (p_{00} + p_{01}) (p_{00} + p_{10}) \\
    &\qquad - p_{01} (p_{10} + p_{11}) (p_{00} + p_{10}) - p_{10} (p_{00} + p_{01}) (p_{01} + p_{11}) \\
    &= p_{00} (p_{10} p_{01} + p_{10} p_{11} + p_{01} p_{11} + p_{11}^2) + p_{11} (p_{00}^2 p_{10} p_{00} + p_{10} p_{00} + p_{01} p_{00} + p_{01} p_{10}) \\
    &\qquad - p_{01} (p_{10} p_{00} + p_{10}^2 + p_{11} p_{00} + p_{10} p_{11}) - p_{10} (p_{01} p_{00} + p_{00} p_{11} + p_{01}^2 + p_{01} p_{11}) \\
    &= p_{00} p_{11} (p_{10} + p_{01} + p_{11} + p_{00}) - p_{10} p_{01} (p_{10} + p_{00} + p_{01} + p_{11})^2 \\
    &= p_{00} p_{11} - p_{10} p_{01}.
\end{align*}
Substitution into the definition of $\rho_S$ yields $\rho_S = p_{11} p_{1+} p_{+1} \rho_{S^*} + 3A + 3(p_{00} p_{11} - p_{01} p_{10})$, where
\begin{align*}
    A =& P(Y_1 < Y_3\,|\,C_{0111}) p_{01} p_{1+} p_{+1} + P(X_1 < X_2\,|\,C_{1011}) p_{10} p_{1+} p_{+1} \\
    &+ P(Y_1 > Y_3\,|\,C_{1101}) p_{11} p_{0+} p_{+1} + P(X_1 > X_2\,|\,C_{1110}) p_{11} p_{1+} p_{+0} \\
    &- P(Y_1 > Y_3\,|\,C_{0111}) p_{01} p_{1+} p_{+1} - P(X_1 < X_2\,|\,C_{1110}) p_{11} p_{1+} p_{+0} \\
    &- P(Y_1 < Y_3\,|\,C_{1101}) p_{11} p_{0+} p_{+1} - P(X_1 > X_2\,|\,C_{1011}) p_{10} p_{1+} p_{+1} \\
    \equiv& A_Y + A_X.
\end{align*}
Here, $A_Y$ consists of all the terms involving $Y_1$, $Y_2$ and $Y_3$, whereas the remaining terms form $A_X$. We now only elaborate upon $A_Y$, the same procedure can be used to simplify $A_X$. Let us first expand upon each term separately and obtain
\begin{align*}
    P(Y_1 < Y_3\,|\,C_{0111}) p_{01} p_{1+} p_{+1} = P(Y_{01} < \tilde Y_{01}) \frac{p_{01}}{p_{+1}} p_{01} p_{1+} p_{+1} + P(Y_{01} < Y_{11}) \frac{p_{11}}{p_{+1}} p_{01} p_{1+} p_{+1}, \\
    P(Y_1 > Y_3\,|\,C_{1101}) p_{11} p_{0+} p_{+1} = P(Y_{11} > Y_{01}) \frac{p_{01}}{p_{+1}} p_{11} p_{0+} p_{+1} + P(Y_{11} > \tilde Y_{11}) \frac{p_{11}}{p_{+1}} p_{11} p_{0+} p_{+1}, \\
    P(Y_1 > Y_3\,|\,C_{0111}) p_{01} p_{1+} p_{+1} = P(Y_{01} > \tilde Y_{01}) \frac{p_{01}}{p_{+1}} p_{01} p_{1+} p_{+1} + P(Y_{01} > Y_{11}) \frac{p_{11}}{p_{+1}} p_{01} p_{1+} p_{+1}, \\
    P(Y_1 < Y_3\,|\,C_{1101}) p_{11} p_{0+} p_{+1} = P(Y_{11} < Y_{01}) \frac{p_{01}}{p_{+1}} p_{11} p_{0+} p_{+1} + P(Y_{11} < \tilde Y_{11}) \frac{p_{11}}{p_{+1}} p_{11} p_{0+} p_{+1}.
\end{align*}
Here, $\tilde Y_{01}, \tilde Y_{11}$ are independent copies of $Y_{01}, Y_{11}$ respectively. These alternative formulations can be substituted back into $A_Y$, resulting in several terms canceling out. In particular, we find that
\begin{align*}
    A_Y &= P(Y_1 < Y_3\,|\,C_{0111}) p_{01} p_{1+} p_{+1} + P(Y_1 > Y_3\,|\,C_{1101}) p_{11} p_{0+} p_{+1} + \\
    &\quad - P(Y_1 > Y_3\,|\,C_{0111}) p_{01} p_{1+} p_{+1} - P(Y_1 < Y_3\,|\,C_{1101}) p_{11} p_{0+} p_{+1} \\
    &= P(Y_{01} < Y_{11}) p_{11} p_{01} (p_{0+} + p_{1+}) - P(Y_{01} > Y_{11}) p_{11} p_{01} (p_{1+} + p_{0+}) \\
    &= p_{11} p_{01} \bigl(1 - 2 P(Y_{01} > Y_{11}) - P(Y_{01} = Y_{11})\bigr).
\end{align*}
Following the same steps for $A_X$ yields
\begin{align*}
    A = p_{11} \Bigl(p_{10} \bigl(1 - 2 p_1^* - p_1^\dagger\bigr) + p_{01} \bigl(1 - 2 p_2^* - p_2^\dagger\bigr) \Bigr).
\end{align*}
Finally, we focus on $p_{1+} p_{+1} \rho_{S^*}$. Conditioning on $S_{ab}$ for all possible combinations of $a, b \in \{0, 1\}$ results in
\begin{align*}
    p_{1+} p_{+1} \rho_{S^*} = p_{11}^2 \rho_{S_{11}} + p_{11} p_{10} \rho_{S_{10}} + p_{11} p_{01} \rho_{S_{01}} + p_{10} p_{01} \rho_{S_{00}},
\end{align*}
which completes the proof. \qed

\section{Proof of Proposition 1}
Here, we present a proof of Proposition~1, where the attainable bounds on $\rho_S$ for zero-inflated count variables are given. Recall from \cite{Mesfioui2005}, that the upper and lower bounds are attained under the Fr\'echet-Hoeffding copula bounds $M$ and $W$, i.e., when the joint distribution of $X, Y$ coincides with
\begin{align*}
    P(X \leq x, Y \leq y) = M(F(x), G(y)) = \min\{F(x), G(y)\},
\end{align*}
and
\begin{align*}
    P(X \leq x, Y \leq y) = W(F(x), G(y)) = \max\{0, F(x) + G(y) - 1\}
\end{align*}
respectively. \cite{Perrone2023} has shown that the two random variables $X_{10}$ and $X_{11}$ have a common support, and therefore $X_{10} > X_{11}$ and $X_{10} < X_{11}$ are true or false only with some probability. Therefore, it is important to have a detailed formulation of their distributions at hand. The derivations for the attainable bounds are set as follows: for the upper, as well as the lower bounds, we first derive (or recall from \cite{Perrone2023}) the necessary distribution functions from $X_{11}, X_{10}, Y_{11}$ and $Y_{01}$, and then compute the components of Eq.~(6) individually.

\subsection{Upper bound}
We consider the upper Fr\'echet-Hoeffding bound $M(u, v) = \min\{u, v\}$ for all $u, v \in [0, 1]$, and, without loss of generality, assume that $p_1 \leq p_2$. \cite{Denuit2017} noted that $p_{11} = 1 - p_2$, $p_{00} = p_1$, $p_{10} = p_2 - p_1$ and $p_{01} = 0$. The probability mass functions of $X_{11}$ and $X_{10}$ have already been derived by \cite{Perrone2023} and can be given by
\begin{equation*}
    P(X_{11} = x) = \begin{cases}
        0 & \text{if } 0 < x < \tilde s, \\
        \frac{F(x) - p_2}{1 - p_2} &\text{if } x = \tilde s, \\
        \frac{F(x) - F(x - 1)}{1 - p_2} & \text{otherwise},
    \end{cases} \qquad \text{and} \qquad P(X_{10} = x) = \begin{cases}
        \frac{F(x) - F(x - 1)}{p_2 - p_1} & \text{if } 0 < x < \tilde s, \\
        \frac{p_2 - F(x - 1)}{p_2 - p_1} & \text{if } x = \tilde s, \\
        0 & \text{otherwise,}
    \end{cases}
\end{equation*}
where $\tilde s \in \mathbb N$ is the point where $F(\tilde s - 1) \leq p_2 < F(\tilde s)$. The random variables $X_{10}$ and $X_{11}$ only have $\tilde s$ in their common support, and therefore $P(X_{11} \geq X_{10}) = 1$ and $P(X_{11} \leq X_{10}) = P(X_{11} = \tilde s) P(X_{10} = \tilde s)$. Clearly, the point $\tilde s$ plays a significant role in the distributions of $X$ and $Y$. We now derive several properties of this joint distribution revolving around $\tilde s$, that will be put to use later in the computation of the terms in Eq.~(6). First of all, it is clear that $\tilde s > 0$, then for $y > 0$,
\begin{equation*}\begin{split}
    P(X = \tilde s, Y &\leq y \ |\ X > 0, Y > 0) = \frac{P(X = \tilde s, 0 < Y \leq y)}{P(X > 0, Y > 0)} \\
    &= \frac{1}{1 - p_2} \Bigl(P(X = \tilde s, Y \leq y) - P(X = \tilde s, Y = 0)\Bigr) \\
    &= \frac{1}{1 - p_2} \Bigl(P(X \leq \tilde s, Y \leq y) - P(X \leq \tilde s  - 1, Y \leq y) + \\
    &\hspace*{2.1cm} - P(X \leq \tilde s, Y = 0) + P(X \leq \tilde s - 1, Y = 0)\Bigr) \\
    &= \frac{1}{1 - p_2} \Bigl(P(X \leq \tilde s, Y \leq y) - \min\{F(\tilde s - 1), G(y)\} + \\
    &\hspace*{2.7cm} - \min\{F(\tilde s), p_2\} + \min\{F(\tilde s - 1), p_2\}\Bigr) \\
    &= \frac{1}{1 - p_2} \Bigl(\min\{F(\tilde s), G(y)\} - p_2\Bigr).
\end{split}\end{equation*}
Therefore, letting $\tilde u \in \mathbb N$ be the point such that $G(\tilde u - 1) < F(\tilde s) \leq G(\tilde u)$, then this simplifies to
\begin{equation*}
    P(X = \tilde s, Y \leq y\, |\, X > 0, Y > 0) = \frac{1}{1 - p_2} \begin{cases}
        0 & \text{if } y = 0, \\
        G(y) - p_2 & \text{if } 0 < y < \tilde u, \\
        F(\tilde s) - p_2 & \text{if } y \geq \tilde u,
    \end{cases}
\end{equation*}
and so
\begin{equation*}
    P(X = \tilde s, Y = y\, |\, X > 0, Y > 0) = \frac{1}{1 - p_2} \begin{cases}
        G(y) - G(y - 1) & \text{if } 0 < y < \tilde u, \\
        F(\tilde s) - G(y - 1) & \text{if } y = \tilde u, \\
        0 & \text{otherwise.}
    \end{cases}
\end{equation*}
We now focus on deriving the distribution of $Y_{11}$. We can follow a similar argument as before to obtain
\begin{align*}
    P(Y_{11} \leq y) &= \frac{P(0 < Y \leq y, X > 0)}{P(X > 0, Y > 0)} \\
    &= \frac{1}{1 - p_2} \bigl(P(Y \leq y) - P(Y = 0) - P(Y \leq y, X = 0) + P(Y = 0, X = 0)\bigr) \\
    &= \frac{1}{1 - p_2} \bigl(G(y) - p_1\bigr).
\end{align*}
Hence, the probability mass function of $Y_{11}$ is
\begin{equation*}
    P(Y_{11}) = \frac{G(y) - G(y - 1)}{1 - p_2}
\end{equation*}
for $y \geq 1$ and zero otherwise.

We now have all the tools needed to derive $\rho_{S_{10}}$. We start by considering the probability of concordance and discordance separately. Note that, under $S_{10}$, $X_1$ is distributed as $X_{11}$ and $X_2$ is distributed as $X_{10}$. Thus, $P(X_1 < X_2\, |\, S_{10}) = 0$, and the probability of concordance can be given by
\begin{align*}
    P(X_1 > X_2, Y_1 > Y_3\, |\, S_{10}) &= P(Y_1 > Y_3\, |\, S_{10}) - P(X_1 \leq X_2, Y_1 > Y_3\, |\, S_{10}).
\end{align*}
In a similar fashion, we can express the probability of discordance as follows
\begin{align*}
    P(X_1 > X_2, Y_1 < Y_3\, |\, S_{10}) &= P(Y_1 < Y_3\, |\, S_{10}) - P(X_1 \leq X_2, Y_1 < Y_3\, |\, S_{10}).
\end{align*}
In these intermediate results, $Y_1$ and $Y_3$ have the same distribution and can therefore be interchanged. Substitution into the definition of $\rho_{S_{10}}$ together with $P(X_1 \leq X_2\, |\, S_{10}) = P(X_1 = X_2 = \tilde s\, |\, S_{10})$ yield
\begin{align*}
    \rho_{S_{10}} &= 3 P(X_1 \leq X_2, Y_1 < Y_3\, |\, S_{10}) - 3 P(X_1 \leq X_2, Y_1 > Y_3\, |\, S_{10}) \\
    &= 3 P(X_2 = \tilde s\, |\, S_{10}) \bigl[ P(X_1 = \tilde s, Y_1 < Y_3\, |\, S_{10}) - P(X_1 = \tilde s, Y_1 > Y_3\, |\, S_{10})\bigr] \\
    &= 3 P(X_{10} = \tilde s) \bigl[P(X_{11} = \tilde s) - 2 P(X_1 = \tilde s, Y_1 \geq Y_3\, |\, S_{10}) + P(X_1 = \tilde s, Y_1 = Y_3\, |\, S_{10})\bigr] \\
    &= 3 P(X_{10} = \tilde s) P(X_{11} = \tilde s) - 3 P(X_{10} = \tilde s) \sum_{y = 1}^\infty P(X_1 = \tilde s, Y_1 = y\, |\, S_{10}) \bigl[2 P(Y_{11} \leq y) - P(Y_{11} = y)\bigr].
\end{align*}
As we already have expressions of $P(X_{10} = \tilde s)$ and $P(X_{11} = \tilde s)$ available, we focus on the summation. We can substitute the margins and joint distributions into the formula, but it is necessary to be careful with the summation, since the closed form expression for $P(X = \tilde s, Y = y\, |\, X > 0, Y > 0)$ heavily depends on the value of $y$. We find that
\begin{align*}
    &\sum_{y = 1}^\infty P(X_1 = \tilde s, Y_1 = y\, |\, S_{10}) \bigl[2 P(Y_{11} \leq y) - P(Y_{11} = y)\bigr] \\
    &\qquad= \sum_{y = 1}^{\tilde u - 1} \frac{G(y) - G(y - 1)}{1 - p_2} \left[2 \frac{G(y) - p_2}{1 - p_2} - \frac{G(y) - G(y - 1)}{1 - p_2}\right] + \frac{F(\tilde s) - G(\tilde u - 1)}{1 - p_2} \left[2 \frac{G(\tilde u) - p_2}{1 - p_2} - \frac{G(\tilde u) - G(\tilde u - 1)}{1 - p_2}\right] \\
    &\qquad= \sum_{y = 1}^{\tilde u - 1} \frac{G^2(y) - G^2(y - 1) - 2 p_2 \bigl(G(y) - G(y - 1)\bigr)}{(1 - p_2)^2} + \frac{F(\tilde s) - G(\tilde u - 1)}{1 - p_2} \frac{G(\tilde u) + G(\tilde u - 1) - 2 p_2}{1 - p_2} \\
    &\qquad= \frac{G^2(\tilde u - 1) - p_2^2 - 2 p_2 \bigl[G(\tilde u - 1) - p_2\bigr]}{(1 - p_2)^2} + \frac{F(\tilde s) - G(\tilde u - 1)}{1 - p_2} \frac{G(\tilde u) + G(\tilde u - 1) - 2 p_2}{1 - p_2} \\
    &\qquad= \frac{p_2^2 - G(\tilde u) G(\tilde u - 1) + F(\tilde s) \bigl[G(\tilde u) + G(\tilde u - 1) - 2 p_2\bigr]}{(1 - p_2)^2}.
\end{align*}
We can now obtain the full formula for $\rho_{S_{10}}$ as follows
\begin{align*}
    p_{11}^2 p_{10} \rho_{S_{10}} &= 3 (1 - p_2) \bigl(F(\tilde s) - p_2\bigr) \bigl(p_2 - F(\tilde s-  1)\bigr) + \\
    &\qquad - 3 \bigl(p_2 - F(\tilde s - 1)\bigr) \left[p_2^2 - G(\tilde u)G(\tilde u - 1) + F(\tilde s) \bigl(G(\tilde u) + G(\tilde u - 1) - 2 p_2\bigr)\right].
\end{align*}
Under the assumption $p_1 \leq p_2$, we have $p_{01} = 0$, and therefore no expressions for $\rho_{S_{01}}$ and $\rho_{S_{00}}$ need to be derived. Since $p_1^* = P(X_{10} > X_{11}) = 0$, we have
\begin{align*}
    &3 p_{11} \left(p_{10} (1 - 2 p_1^* - p_1^\dagger) + p_{01} (1 - 2 p_2^* - p_2^\dagger)\right) + 3 (p_{11} p_{00} - p_{10} p_{01}) \\
    &\qquad= 3 (1 - p_2) (p_2 - p_1) - 3 \left(F(\tilde s) - p_2\right) \left(p_2 - F(\tilde s - 1)\right) + 3 (1 - p_2) p_1 \\
    &\qquad= 3 (1 - p_2) p_2 - 3 \left(F(\tilde s) - p_2\right) \left(p_2 - F(\tilde s - 1)\right).
\end{align*}
We now collect all the terms and obtain an upper bound of $\rho_S$ for discrete data as follows
\begin{align*}
    \rho_S &= (1 - p_2)^3 \rho_{S_{11}}^{\max} + 3 (1 - p_2) p_2 - 3 p_2 \bigl(F(\tilde s) - p_2\bigr) \bigl(p_2 - F(\tilde s - 1)\bigr) \\
    &\qquad + 3 \bigl(p_2 - F(\tilde s - 1)\bigr) \left[-p_2^2 + F(\tilde s) \bigl(2 p_2 - G(\tilde u) - G(\tilde u - 1)\bigr) + G(\tilde u) G(\tilde u - 1)\right] \\
    &= (1 - p_2)^3 \rho_{S_{11}}^{\max} + 3 (1 - p_2) p_2 \\
    &\qquad + 3 \bigl(p_2 - F(\tilde s - 1)\bigr) \left[F(\tilde s) \bigl(p_2 - G(\tilde u) - G(\tilde u - 1)\bigr) + G(\tilde u) G(\tilde u - 1)\right].
\end{align*}

\subsection{Lower bound}
To derive the attainable lower bound on $\rho_S$, we follow the same steps as in the previous section but consider the lower Fr\'echet-Hoeffding copula bound $W(u, v) = \max\{u + v - 1, 0\}$ instead. Under the assumption that $p_1 + p_2 \geq 1$, \cite{Denuit2017} noted that $p_{11} = 0$, $p_{10} = 1 - p_2$, $p_{01} = 1 - p_1$ and $p_{00} = p_1 + p_2 - 1$, hence the following
\begin{equation*}
    \rho_S^{\min} = - 3 (1 - p_1)(1 - p_2).
\end{equation*}
The derivations for the case where $p_1 + p_2 < 1$ are more elaborate and again require expressions for several margins and probabilities as in the previous section. this case involves many terms of the same form, therefore we start by recalling some notation. We define $w(x, y) = 1 - F(x) - G(y)$ for all $(x, y) \in \mathbb N^2$, observe its close relation to the lower Fr\'echet-Hoeffding bound $W$. From \cite{Denuit2017}, we have $p_{11} = 1 - p_1 - p_2 = w(0, 0)$, $p_{10} = p_2$, $p_{01} = p_1$ and $p_{00} = 0$. The probability mass functions of $X_{11}$ and $X_{10}$ derived by \cite{Perrone2023} can be used to define the distribution functions as
\begin{equation*}
    P(X_{11} = x) = \frac{1}{p_{11}} \begin{cases}
        F(x) - F(x - 1) & \text{if } x < \tilde s', \\
        1 - p_2 - F(x - 1) & \text{if } x = \tilde s', \\
        0 & \text{otherwise},
    \end{cases} \qquad P(X_{11} \leq x) = \frac{1}{p_{11}} \begin{cases}
        F(x) - p_1 & \text{if } x < \tilde s', \\
        1 & \text{if } x \geq \tilde s',
    \end{cases}
\end{equation*}
and
\begin{equation*}
    P(X_{10} = x) = \frac{1}{p_{10}} \begin{cases}
        0 & \text{if } x < \tilde s', \\
        F(x) + p_2 - 1 & \text{if } x = \tilde s', \\
        F(x) - F(x - 1) & \text{if } x > \tilde s',
    \end{cases} \qquad P(X_{10} \leq x) = \frac{1}{p_{10}} \begin{cases}
        0 & \text{if } x < \tilde s', \\
        F(x) + p_2 - 1 & \text{if } x \geq \tilde s',
    \end{cases}
\end{equation*}
for $x \geq 1$. Unfortunately, \cite{Perrone2023} do not specify the probability mass and distribution functions of $Y_{11}$ and $Y_{01}$, however, such distributions can be derived analogously and are given for $y \geq 1$ by
\begin{equation*}
    P(Y_{11} = y) = \frac{1}{p_{11}} \begin{cases}
        G(y) - G(y - 1) & \text{if } y < \tilde t', \\
        1 - p_1 - G(y - 1) & \text{if } y = \tilde t', \\
        0 & \text{otherwise,}
    \end{cases} \qquad P(Y_{11} \leq y) = \frac{1}{p_{11}} \begin{cases}
        G(y) - p_2 & \text{if } y < \tilde t', \\
        1 & \text{if } y \geq \tilde t',
    \end{cases}
\end{equation*}
and
\begin{equation*}
    P(Y_{01} = y) = \frac{1}{p_{01}} \begin{cases}
        0 & \text{if } y < \tilde t', \\
        G(y) + p_1 - 1 & \text{if } y = \tilde t', \\
        G(y) - G(y - 1) & \text{if } y > \tilde t',
    \end{cases} \qquad P(Y_{01} \leq y) = \frac{1}{p_{01}} \begin{cases}
        0 & \text{if } y < \tilde t', \\
        G(y) + p_1 - 1 & \text{if } y \geq \tilde t'.
    \end{cases}
\end{equation*}
Similarly to the upper bound, we have two points $\tilde s'$ and $\tilde t'$ that play a pivotal role in deriving the expressions for $\rho_S$ lower bound. Before actually computing the bounds of each component, we still consider the following probability
\begin{align*}
    &P(X = \tilde s',  Y \leq y\, |\, X > 0, Y > 0) \\
    &\qquad= \frac{P(X = \tilde s', 0 < Y \leq y)}{P(X > 0, Y > 0)} \\
    &\qquad= \frac{1}{p_{11}} \bigl(P(X \leq \tilde s', Y \leq y) - P(X \leq \tilde s' - 1, Y \leq y) - P(X \leq \tilde s',  Y = 0) + P(X \leq \tilde s' - 1, Y = 0)\bigr) \\
    &\qquad= \frac{1}{p_{11}} \bigl(\max\{F(\tilde s') + G(y) - 1, 0\} - \max\{F(\tilde s' - 1) + G(y) - 1, 0\} \\
    &\qquad\hspace*{2cm} - \max\{F(\tilde s') + p_2 - 1, 0\} + \max\{F(\tilde s' - 1) + p_2 - 1, 0\}\bigr) \\
    &\qquad= \frac{1}{p_{11}}\bigl(F(\tilde s') + G(y) - 1 - \max\{F(\tilde s' - 1) + G(y) - 1, 0\} - F(\tilde s') - p_2 + 1\bigr) \\
    &\qquad= \frac{1}{p_{11}} \bigl(G(y) - p_2 - \max\{F(\tilde s' - 1) + G(y) - 1, 0\}\bigr).
\end{align*}
Here, the fourth equality follows from $G(y) \geq G(0) = p_2$, and so $F(\tilde s') + G(y) - 1 \geq 0$. Recall that $u \in \mathbb N$ is the point such that $G(\tilde u' - 1) \leq 1 - F(\tilde s' - 1) < G(\tilde u')$, now we have
\begin{equation*}
    P(X = \tilde s', Y \leq y\, |\, X > 0, Y > 0) = \frac{1}{p_{11}} \begin{cases}
        G(y) - p_2 & \text{if } y < \tilde u', \\
        1 - F(\tilde s' - 1) - p_2 & \text{if } y \geq \tilde u',
    \end{cases}
\end{equation*}
and so
\begin{equation*}
    P(X = \tilde s', Y = y\, |\, X > 0, Y > 0) = \frac{1}{p_{11}} \begin{cases}
        G(y) - G(y - 1) & \text{if } y < \tilde u', \\
        1 - F(\tilde s' - 1) - G(y - 1) & \text{if } y = \tilde u', \\
        0 & \text{if } y > \tilde u'.
    \end{cases}
\end{equation*}
We can follow the same approach to derive an expression for $P(X = x, Y = \tilde t'\, |\, X > 0, Y > 0)$, but omit the details here.
In these probabilities, $Y$ is marginally distributed as $Y_{11}$. In light of the distribution of $Y_{11}$, it becomes clear that $\tilde u' \leq \tilde t'$, and by the same reasoning $\tilde v' \leq \tilde s'$. From here on, we need to make case distinctions between $\tilde u' = \tilde t'$, $\tilde v' = \tilde s'$, $\tilde u' > \tilde t'$ and $\tilde v' > \tilde s'$.

We now compute $\rho_{S_{10}}, \rho_{S_{01}}$ and $\rho_{S_{00}}$. First, we observe that $X_{11}$ and $X_{10}$ only have $\tilde s'$ in common in their supports. From the probability mass functions recalled earlier one obtains $P(X_{11} > X_{10}) = P(Y_{11} > Y_{01}) = 0$, and therefore the formulations for $\rho_{S_{10}}, \rho_{S_{01}}$ and $\rho_{S_{00}}$ can be simplified and tackled more easily. We consider each of these terms separately, starting with $\rho_{S_{10}}$. Here, $Y_1$ and $Y_2$ have the same distribution and therefore $P(Y_1 < Y_2\, |\, S_{10}) = P(Y_1 > Y_2\, |\, S_{10})$. Moreover, $X_2$ is distributed as $X_{10}$. Therefore, using that $P(X_1 > X_{10}\, |\, S_{10}) = 0$ yields
\begin{align*}
    \rho_{S_{10}} &= 3 P(X_1 < X_2, Y_1 < Y_3\, |\, S_{10}) - 3 P(X_1 < X_2, Y_1 > Y_3\, |\, S_{10}) \\
    &= 3 P(Y_1 < Y_3\, |\, S_{10}) - 3 P(X_1 \geq X_2, Y_1 < Y_3\, |\, S_{10}) - 3 P(Y_1 > Y_3\, |\, S_{10}) + 3 P(X_1 \geq X_2, Y_1 > Y_3\, |\ S_{10}) \\
    &= - 3 P(X_1 \geq X_2\, |\ S_{10}) + 6 P(X_1 \geq X_2, Y_1 \geq Y_3\, |\, S_{10}) - 3 P(X_1 \geq X_2, Y_1 = Y_3\, |\, S_{10}) \\
    &= - 3 P(X_{11} = \tilde s') P(X_{10} = \tilde s') + 3 P(X_{10} = \tilde s') \left[2 P(X_1 = \tilde s', Y_1 \geq Y_3\, |\, S_{10}) - P(X_1 = \tilde s', Y_1 = Y_3\, |\, S_{10})\right].
\end{align*}
The last equality follows from $P(X_{11} \geq X_{01}) = P(X_{11} = X_{10} = \tilde s')$. We now analyze the last term in parentheses. We sum over the support of the random variable $Y_1$ and substitute the relevant margins and distributions derived earlier. Although we have a closed form expression available for $P(X = \tilde s', Y = y\, |\, X > 0, Y > 0)$, they are different for $y \in \{1, \hdots, \tilde u' - 1\}$ and $x = \tilde u'$. Therefore, we consider again the two cases separately and write the terms as
\begin{align*}
    &2 P(X_1 = \tilde s', Y_1 \geq Y_3\, |\, S_{10}) - P(X_1 = \tilde s', Y_1 = Y_3\, |\, S_{10}) \\
    &\qquad= \sum_{x = 1}^{\tilde u' - 1} P(X = \tilde s', Y = y\, |\, X > 0, Y > 0) \Bigl(2 P(Y_{11} \leq y) - P(Y_{11} = y)\Bigr) \\
    &\qquad\qquad + P(X = \tilde s', Y = \tilde u'\, |\, X > 0, Y > 0) \Bigl(2 P(Y_{11} \leq \tilde u') - P(Y_{11} = \tilde v'\Bigr).
\end{align*}
As it is possible that $\tilde u' = 1$, the summation here should equal 0. in any case, $\tilde u' - 1 < \tilde t'$ and hence
\begin{align*}
    &\sum_{y = 1}^{\tilde u' - 1} P(X = \tilde s', Y = y\, |\, X > 0, Y > 0) \Bigl(2 P(Y_{11} \leq y) - P(Y_{11} = y)\Bigr) \\
    &\qquad = \sum_{y = 1}^{\tilde u' - 1} \frac{G(x) - G(x - 1)}{p_{11}} \left(2 \frac{G(x) - p_2}{p_{11}} - \frac{G(x) - G(x - 1)}{p_{11}}\right) \\
    &\qquad= - 2 p_2 \frac{G(\tilde u' - 1) - p_2}{p_{11}^2} + \frac{1}{p_{11}^2} \sum_{y = 1}^{\tilde u' - 1} \left(G^2(y) - G^2(y - 1)\right) \\
    &\qquad= \frac{\left(G(\tilde u' - 1) - p_2\right)^2}{p_{11}^2}.
\end{align*}
When $\tilde u' = 1$, one has $G(\tilde u' - 1) = F(0) = p_2$, and so this summation indeed equals zero. Therefore, we do not have to make another case distinction and may include this result entirely. It still remains to compute $P(X = \tilde s', Y = \tilde u'\, |\, X > 0, Y > 0)$. To do so, we need to consider two cases, one where $\tilde t' = \tilde u'$, and another where $\tilde t' > \tilde u'$. We first analyze $\tilde t' = \tilde u'$. Now $P(Y_{11} \leq \tilde u') = 1$, and hence
\begin{align*}
    &P(X = \tilde s', Y = \tilde u'\, |\, X > 0, Y > 0) \Bigl(2 P(Y_{11} \leq \tilde u') - P(Y_{11} = \tilde u')\Bigr) \\
    &\qquad= \frac{1 - F(\tilde s' - 1) - G(\tilde u' - 1)}{p_{11}} \left(2 - \frac{1 - p_1 - G(\tilde u' - 1)}{p_{11}}\right) \\
    &\qquad= \frac{1 - F(\tilde s' - 1) - G(\tilde u' - 1)}{p_{11}} \frac{(1 - p_1) + G(\tilde u' - 1) - 2 p_2}{p_{11}}.
\end{align*}
On the other hand, when $\tilde t' > \tilde u'$, one obtains
\begin{align*}
    &P(X = \tilde s', Y = \tilde u'\, |\, X > 0, Y > 0) \Bigl(2 P(Y_{11} \leq \tilde u') - P(Y_{11} = \tilde u')\Bigr) \\
    &\qquad= \frac{1 - F(\tilde s' - 1) - G(\tilde u' - 1)}{p_{11}} \left(2 \frac{G(\tilde u') - p_2}{p_{11}} - \frac{G(\tilde u') - G(\tilde u' - 1)}{p_{11}}\right) \\
    &\qquad= \frac{1 - F(\tilde s' - 1) - G(\tilde u' - 1)}{p_{11}} \frac{G(\tilde u')  + G(\tilde u' - 1) - 2 p_2}{p_{11}}.
\end{align*}
Now letting $I_2 = \min\{1 - p_1, G(\tilde u')$, one has
\begin{equation*}
    P(X = \tilde s', Y = \tilde u'\, |\, X > 0, Y > 0) \Bigl(2 P(Y_{11} \leq \tilde u') - P(Y_{11} = \tilde u')\Bigr) = \frac{\bigl(1 - F(\tilde s' -  1) - G(\tilde u' - 1)\bigr) \bigl(I_1 + G(\tilde u' - 1) - 2 p_2\bigr)}{p_{11}^2}
\end{equation*}
Finally, substituting the relevant results and margins into the definition of $\rho_{S_{10}}$ yields
\begin{align*}
    p_{11}^2 p_{10} \rho_{S_{10}} &= 3 w(0, 0) w(\tilde s', 0) w(\tilde s'-,0) - 3 w(\tilde s', 0) \left[\Bigl(G(\tilde u' - 1) - p_2\Bigr)^2 + w(\tilde s'-1, \tilde u'-1) \Bigl(I_1 + G(\tilde u' - 1) - 2 p_2\Bigr)\right].
\end{align*}
In a similar manner, one may show that
\begin{align*}
    p_{11}^2 p_{01} \rho_{S_{01}} = 3 w(0, 0) w(0, \tilde t') w(0, \tilde t'-1) - 3 w(0, \tilde t') \left[\Bigl(F(\tilde v' - 1) - p_1\Bigr)^2 + w(\tilde v'-, \tilde t'-1) \Bigl(I_2 + F(\tilde v' - 1) - 2 p_1\Bigr)\right].
\end{align*}
To finalize the proof, we obtain $\rho_{S_{00}}$. Since under this condition, only $X_1 < X_2$ and $Y_1 < Y_3$ hold with some probability, the term simplifies to
\begin{align*}
    \rho_{S_{00}} &= 3 P(X_1 < X_2, Y_1 < Y_3\, |\, S_{00}) \\
    &= 3 \Bigl(1 - P(Y_1 \geq Y_3\, |\, S_{00}) - P(X_1 \geq X_2\, |\, S_{00}) + P(X_1 \geq X_2, Y_1 \geq Y_3\, |\, S_{00})\Bigr) \\
    &= 3 \Bigl(1 - P(Y_{11} = \tilde t') P(Y_{01} = \tilde t') - P(X_{11} = \tilde s') P(X_{10} = \tilde s') \\
    &\qquad\qquad + P(X = \tilde s', Y = \tilde t'\, |\, X > 0, Y > 0) P(X_{10} = \tilde s') P(Y_{01} = \tilde t')\Bigr) \\
    &= 3 \left(1 + \frac{W(0, \tilde t'-1) W(0, \tilde t')}{p_{11} p_{01}} + \frac{W(\tilde s'-1, 0) W(\tilde s', 0)}{p_{11} p_{10}} + \frac{W(\tilde s'-, \tilde t'-1) W(\tilde s',0) W(0,\tilde t')}{p_{11} p_{10} p_{01}} \mathbbm 1(\tilde s' = \tilde v')\right).
\end{align*}
For the remaining terms in Eq.~(6), we can substitute the results on $p_1^*, p_1^\dagger, p_2^*, p_2^\dagger$ from \cite{Perrone2023} and find
\begin{align*}
    &3 p_{11} \left(p_{10} (1 - 2 p_1^* - p_1^\dagger) + p_{01} (1 - 2 p_2^* - p_2^\dagger)\right) \\
    &\qquad = - w(0, 0) (p_1 + p_2) - 3 w(\tilde s', 0) w(\tilde s'-, 0) - 3 w(0, \tilde t') w(0, \tilde t'-1) - 3 p_1 p_2.
\end{align*}
We can collect all of these results and find the following expression for $\rho_S^{\min}$, which concludes the proof
\begin{align*}
    \rho_S^{\min} &= w(0, 0)^3 \rho_{S_{11}}^{\min} + 3 w(0, 0) (p_1 p_2 - p_1 - p_2) - 3 p_1 p_2 + w(\tilde s'-, \tilde u'-1) w(\tilde s', 0) w(0, \tilde t') \mathbbm 1(\tilde s' = \tilde v') + \\
            &\quad - 3 w(\tilde s', 0) \left[p_2 w(\tilde s'-, 0) + \bigl(G(\tilde u'-1) - p_2\bigr)^2 + w(\tilde s'-, \tilde u'-1) \bigl(I_2 - 2 p_2 + G(\tilde u'-1)\bigr)\right] \\
            &\quad - 3 w(0, \tilde t') \left[p_1 w(0, \tilde t'-1) + \bigl(F(\tilde v'-1) - p_1\bigr)^2 + w(\tilde v'-, \tilde t'-1) \bigl(I_1 - 2 p_1 + F(\tilde v'-1)\bigr)\right].
\end{align*} \qed

\section{Simulation results}\label{app:sim}

\begin{figure}[h!]
    \centering
    \includegraphics[width=0.25\linewidth]{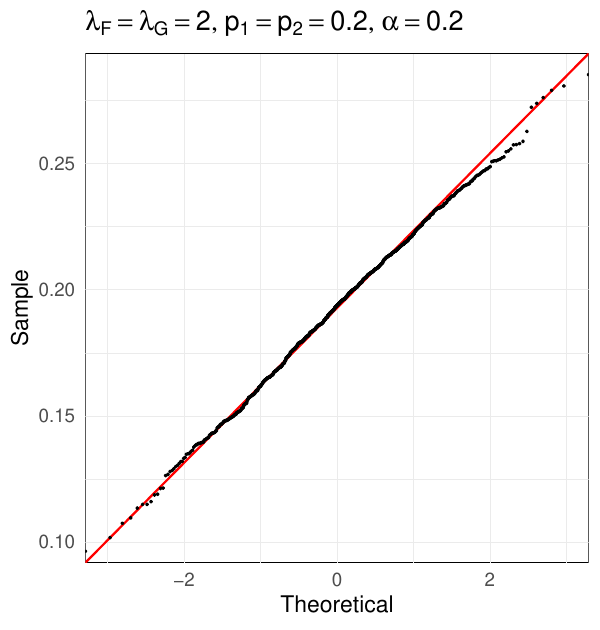}
    \includegraphics[width=0.25\linewidth]{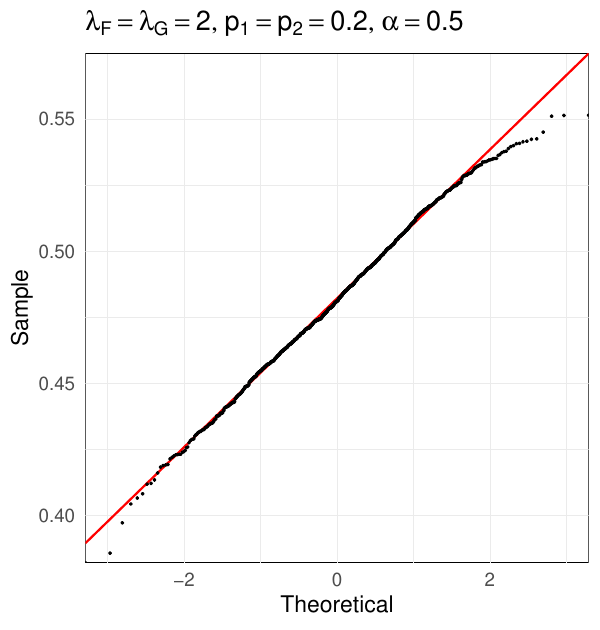}
    \includegraphics[width=0.25\linewidth]{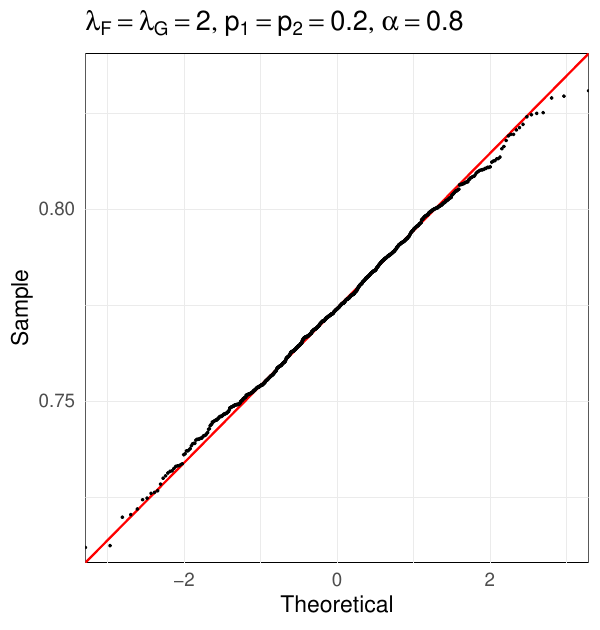}
    
    \includegraphics[width=0.25\linewidth]{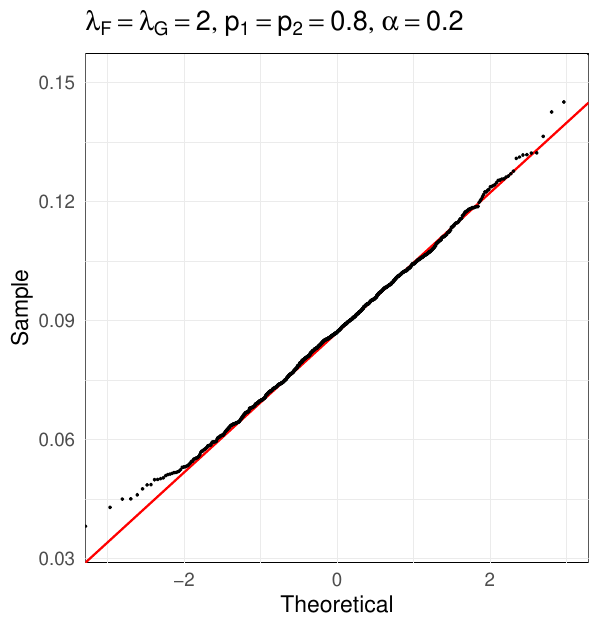}
    \includegraphics[width=0.25\linewidth]{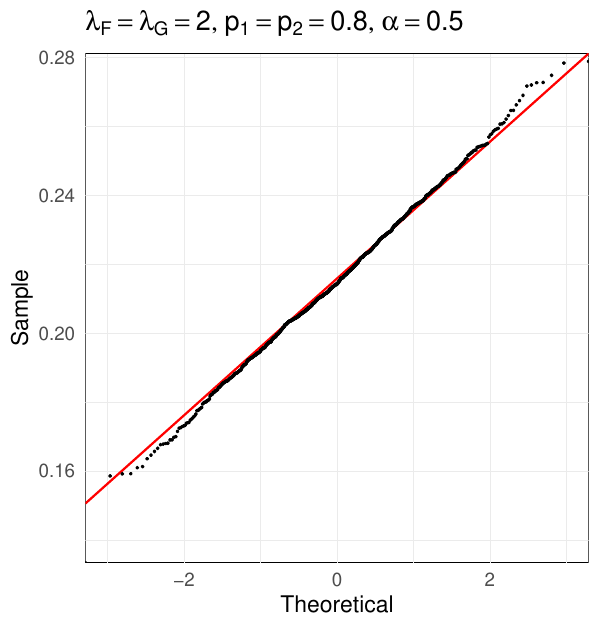}
    \includegraphics[width=0.25\linewidth]{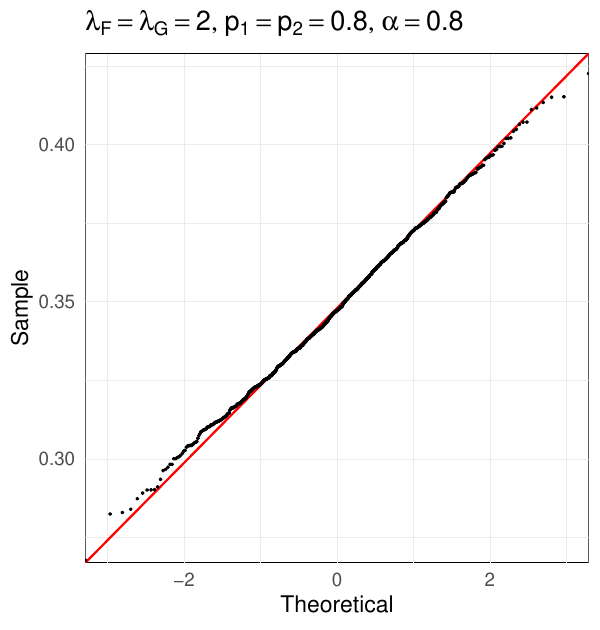}
    \caption{QQ-plots of $\hat \rho_A$ with sample size $N = 1000$ under various parameter configurations. The simulations are based on 1000 repetitions.}
\end{figure}

\begin{table}[h!]
\centering
\caption{Simulation results based on 1000 repetitions. The reported MSE* is 100 times the estimated MSE.}
\scriptsize
\begin{tabular}{lllllllllllllllllll}
\hline \hline
      &      & \multicolumn{5}{l}{$\lambda_F =   \lambda_G = 2, p_1 = p_2 = 0.2$}         &                 &                 &      &  & \multicolumn{5}{l}{$\lambda_F =   \lambda_G = 2, p_1 = p_2 = 0.8$}         &                 &                 &      \\ \cline{3-10} \cline{12-19} 
      &      & $\rho_S$        & \multicolumn{2}{l}{$\hat \rho_A$} &             &        & \multicolumn{2}{l}{$\hat \rho_N$} &      &  & $\rho_S$        & \multicolumn{2}{l}{$\hat \rho_A$} &             &        & \multicolumn{2}{l}{$\hat \rho_N$} &      \\ \cline{13-15} \cline{17-19} 
$\alpha$ & N    &                 & Mean            & MSE*             & SE          &        & Mean            & MSE*             & SE   &  &                 & Mean            & MSE*             & SE          &        & Mean            & MSE*             & SE   \\ \cline{1-6} \cline{8-10} \cline{12-15} \cline{17-19} 
0.2   & 100  & 0.19            & 0.19            & 0.93            & 0.10        &        & 0.19            & 0.91            & 0.10 &  & 0.09            & 0.09            & 0.28            & 0.05        &        & 0.09            & 0.28            & 0.05 \\
      & 150  &                 & 0.19            & 0.59            & 0.08        &        & 0.19            & 0.58            & 0.08 &  &                 & 0.09            & 0.19            & 0.04        &        & 0.09            & 0.19            & 0.04 \\
      & 250  &                 & 0.19            & 0.38            & 0.06        &        & 0.19            & 0.38            & 0.06 &  &                 & 0.09            & 0.12            & 0.03        &        & 0.09            & 0.12            & 0.03 \\
      & 500  &                 & 0.19            & 0.18            & 0.04        &        & 0.19            & 0.18            & 0.04 &  &                 & 0.09            & 0.06            & 0.02        &        & 0.09            & 0.06            & 0.02 \\
      & 1000 &                 & 0.19            & 0.09            & 0.03        &        & 0.19            & 0.09            & 0.03 &  &                 & 0.09            & 0.03            & 0.02        &        & 0.09            & 0.03            & 0.02 \\ \cline{1-6} \cline{8-10} \cline{12-15} \cline{17-19} 
0.8   & 100  & 0.76            & 0.77            & 0.41            & 0.06        &        & 0.75            & 0.38            & 0.06 &  & 0.35            & 0.34            & 0.56            & 0.07        &        & 0.34            & 0.56            & 0.07 \\
      & 150  &                 & 0.77            & 0.29            & 0.05        &        & 0.75            & 0.26            & 0.05 &  &                 & 0.35            & 0.35            & 0.06        &        & 0.35            & 0.35            & 0.06 \\
      & 250  &                 & 0.77            & 0.19            & 0.04        &        & 0.76            & 0.16            & 0.04 &  &                 & 0.35            & 0.22            & 0.05        &        & 0.34            & 0.22            & 0.05 \\
      & 500  &                 & 0.77            & 0.11            & 0.03        &        & 0.76            & 0.08            & 0.03 &  &                 & 0.35            & 0.11            & 0.03        &        & 0.35            & 0.11            & 0.03 \\
      & 1000 &                 & 0.77            & 0.07            & 0.02        &        & 0.76            & 0.04            & 0.02 &  &                 & 0.35            & 0.06            & 0.02        &        & 0.35            & 0.06            & 0.02 \\ \hline \hline
      &      & \multicolumn{5}{l}{$\lambda_F = 2, \lambda_G = 8, p_1 =   p_2 = 0.2$}      &                 &                 &      &  & \multicolumn{5}{l}{$\lambda_F = 2, \lambda_G = 8, p_1 =   p_2 = 0.8$}      &                 &                 &      \\ \cline{3-10} \cline{12-19} 
      &      & $\rho_S$        & \multicolumn{2}{l}{$\hat \rho_A$} &             &        & \multicolumn{2}{l}{$\hat \rho_N$} &      &  & $\rho_S$        & \multicolumn{2}{l}{$\hat \rho_A$} &             &        & \multicolumn{2}{l}{$\hat \rho_N$} &      \\ \cline{4-6} \cline{8-10} \cline{13-15} \cline{17-19} 
$\alpha$ & N    &                 & Mean            & MSE*             & SE          &        & Mean            & MSE*             & SE   &  &                 & Mean            & MSE*             & SE          &        & Mean            & MSE*             & SE   \\ \cline{1-6} \cline{8-10} \cline{12-15} \cline{17-19} 
0.2   & 100  & 0.19            & 0.19            & 0.93            & 0.10        &        & 0.19            & 0.92            & 0.10 &  & 0.09            & 0.09            & 0.30            & 0.05        &        & 0.09            & 0.30            & 0.05 \\
      & 150  &                 & 0.19            & 0.69            & 0.08        &        & 0.19            & 0.68            & 0.08 &  &                 & 0.09            & 0.20            & 0.04        &        & 0.09            & 0.20            & 0.04 \\
      & 250  &                 & 0.19            & 0.42            & 0.06        &        & 0.18            & 0.41            & 0.06 &  &                 & 0.09            & 0.12            & 0.03        &        & 0.09            & 0.12            & 0.03 \\
      & 500  &                 & 0.19            & 0.19            & 0.04        &        & 0.19            & 0.19            & 0.04 &  &                 & 0.09            & 0.06            & 0.03        &        & 0.09            & 0.06            & 0.03 \\
      & 1000 &                 & 0.19            & 0.10            & 0.03        &        & 0.19            & 0.09            & 0.03 &  &                 & 0.09            & 0.03            & 0.02        &        & 0.09            & 0.03            & 0.02 \\ \cline{1-6} \cline{8-10} \cline{12-15} \cline{17-19} 
0.8   & 100  & 0.75            & 0.76            & 0.40            & 0.06        &        & 0.75            & 0.39            & 0.06 &  & 0.35            & 0.34            & 0.52            & 0.07        &        & 0.34            & 0.52            & 0.07 \\
      & 150  &                 & 0.76            & 0.28            & 0.05        &        & 0.75            & 0.27            & 0.05 &  &                 & 0.35            & 0.38            & 0.06        &        & 0.35            & 0.38            & 0.06 \\
      & 250  &                 & 0.76            & 0.17            & 0.04        &        & 0.75            & 0.16            & 0.04 &  &                 & 0.35            & 0.22            & 0.05        &        & 0.35            & 0.22            & 0.05 \\
      & 500  &                 & 0.76            & 0.09            & 0.03        &        & 0.75            & 0.08            & 0.03 &  &                 & 0.35            & 0.11            & 0.03        &        & 0.35            & 0.11            & 0.03 \\
      & 1000 &                 & 0.76            & 0.05            & 0.02        &        & 0.75            & 0.04            & 0.02 &  &                 & 0.35            & 0.05            & 0.02        &        & 0.35            & 0.05            & 0.02 \\ \hline \hline
      &      & \multicolumn{5}{l}{$\lambda_F = \lambda_G = 8, p_1 =   p_2 = 0.2$}         &                 &                 &      &  & \multicolumn{5}{l}{$\lambda_F = \lambda_G = 8, p_1 =   p_2 = 0.8$}         &                 &                 &      \\ \cline{3-10} \cline{12-19} 
      &      & $\rho_S$        & \multicolumn{2}{l}{$\hat \rho_A$} &             &        & \multicolumn{2}{l}{$\hat \rho_N$} &      &  & $\rho_S$        & \multicolumn{2}{l}{$\hat \rho_A$} &             &        & \multicolumn{2}{l}{$\hat \rho_N$} &      \\ \cline{4-6} \cline{8-10} \cline{13-15} \cline{17-19} 
$\alpha$ & N    &                 & Mean            & MSE*             & SE          &        & Mean            & MSE*             & SE   &  &                 & Mean            & MSE*             & SE          &        & Mean            & MSE*             & SE   \\ \cline{1-6} \cline{8-10} \cline{12-15} \cline{17-19} 
0.2   & 100  & 0.20            & 0.20            & 1.07            & 0.10        &        & 0.20            & 1.06            & 0.10 &  & 0.10            & 0.10            & 0.35            & 0.06        &        & 0.10            & 0.35            & 0.06 \\
      & 150  &                 & 0.20            & 0.71            & 0.08        &        & 0.20            & 0.71            & 0.08 &  &                 & 0.10            & 0.22            & 0.05        &        & 0.10            & 0.22            & 0.05 \\
      & 250  &                 & 0.19            & 0.44            & 0.07        &        & 0.19            & 0.44            & 0.07 &  &                 & 0.10            & 0.13            & 0.04        &        & 0.10            & 0.13            & 0.04 \\
      & 500  &                 & 0.20            & 0.19            & 0.04        &        & 0.19            & 0.19            & 0.04 &  &                 & 0.10            & 0.07            & 0.03        &        & 0.10            & 0.07            & 0.03 \\
      & 1000 &                 & 0.20            & 0.10            & 0.03        &        & 0.20            & 0.10            & 0.03 &  &                 & 0.10            & 0.03            & 0.02        &        & 0.10            & 0.03            & 0.02 \\ \cline{1-6} \cline{8-10} \cline{12-15} \cline{17-19} 
0.8   & 100  & 0.79            & 0.79            & 0.41            & 0.06        &        & 0.78            & 0.41            & 0.06 &  & 0.39            & 0.38            & 0.57            & 0.07        &        & 0.38            & 0.57            & 0.07 \\
      & 150  &                 & 0.79            & 0.27            & 0.05        &        & 0.78            & 0.27            & 0.05 &  &                 & 0.39            & 0.38            & 0.06        &        & 0.39            & 0.38            & 0.06 \\
      & 250  &                 & 0.79            & 0.17            & 0.04        &        & 0.79            & 0.17            & 0.04 &  &                 & 0.39            & 0.21            & 0.05        &        & 0.39            & 0.21            & 0.05 \\
      & 500  &                 & 0.79            & 0.09            & 0.03        &        & 0.79            & 0.09            & 0.03 &  &                 & 0.39            & 0.11            & 0.03        &        & 0.39            & 0.11            & 0.03 \\
      & 1000 &                 & 0.79            & 0.04            & 0.02        &        & 0.79            & 0.04            & 0.02 &  &                 & 0.39            & 0.05            & 0.02        &        & 0.39            & 0.05            & 0.02 \\ \hline \hline
\end{tabular}
\end{table}

\begin{table}[h!]
\centering
\footnotesize
\caption{Estimates of the standard deviation of $\hat \rho_S$ for $N = 150$ based on Monte-Carlo simulations with 1000 repetitions ($\hat \sigma_{\rm MC}$) and the average estimated standard deviation using 1000 bootstrap resamples ($\hat \sigma_{\rm B}$) over 150 simulations.}
{
\begin{tabular}{lllllllllll} \hline\hline
&&& \multicolumn{2}{c}{$\lambda_F = \lambda_G = 2$} && \multicolumn{2}{l}{$\lambda_F = 2, \lambda_G = 8$} && \multicolumn{2}{l}{$\lambda_F = \lambda_G = 8$} \\\cline{4-5} \cline{7-8} \cline{10-11}
$p_1 = p_2$ & $\alpha$ && $\hat \sigma_{\rm MC}$ &  $\hat \sigma_{\rm B}$ && $\hat \sigma_{\rm MC}$ & $\hat \sigma_{\rm B}$ && $\hat \sigma_{\rm MC}$ & $\hat \sigma_{\rm B}$ \\\hline
0.2 & 0.2 &  & 0.077 & 0.080 &  & 0.083 & 0.081 &  & 0.085 & 0.083 \\
    & 0.5 &  & 0.074 & 0.072 &  & 0.076 & 0.073 &  & 0.077 & 0.076 \\
    & 0.8 &  & 0.052 & 0.051 &  & 0.052 & 0.050 &  & 0.052 & 0.052 \\\hline
0.8 & 0.2 &  & 0.044 & 0.043 &  & 0.045 & 0.046 &  & 0.047 & 0.047 \\
    & 0.5 &  & 0.054 & 0.053 &  & 0.056 & 0.054 &  & 0.058 & 0.055 \\
    & 0.8 &  & 0.059 & 0.059 &  & 0.062 & 0.060 &  & 0.062 & 0.060 \\\hline\hline
\end{tabular}}
\end{table}

\begin{table}[h!]
\caption{Average of the bounds' estimates for $N = 150$, based on 1000 repetitions. }
\footnotesize
\centering
\begin{tabular}{lllllllllll}
\hline \hline
$\lambda_F$ & $\lambda_G$ & $p_1$  & $p_2$  & True bounds     & Estimated bounds &  & $p_1$  & $p_2$  & True bounds     & Estimated bounds \\ \hline
2       & 2       & 0.2 & 0.2 & $[-0.89, 0.95]$ & $[-0.91, 0.95]$  &  & 0.8 & 0.2 & $[-0.36, 0.42]$ & $[-0.36, 0.42]$  \\
        &         &     & 0.5 & $[-0.76, 0.79]$ & $[-0.76, 0.80]$  &  &     & 0.5 & $[-0.22, 0.43]$ & $[-0.22, 0.42]$  \\
        &         &     & 0.8 & $[-0.36, 0.42]$ & $[-0.36, 0.42]$  &  &     & 0.8 & $[-0.09, 0.43]$ & $[-0.09, 0.40]$  \\ \hline
2       & 8       & 0.2 & 0.2 & $[-0.93, 0.94]$ & $[-0.95, 0.97]$  &  & 0.8 & 0.2 & $[-0.41, 0.43]$ & $[-0.41, 0.43]$  \\
        &         &     & 0.5 & $[-0.83, 0.85]$ & $[-0.83, 0.86]$  &  &     & 0.5 & $[-0.26, 0.43]$ & $[-0.26, 0.43]$  \\
        &         &     & 0.8 & $[-0.41, 0.47]$ & $[-0.42, 0.48]$  &  &     & 0.8 & $[-0.10, 0.43]$ & $[-0.26, 0.43]$  \\ \hline
8       & 8       & 0.2 & 0.2 & $[-0.97, 0.99]$ & $[-0.98, 0.99]$  &  & 0.8 & 0.2 & $[-0.48, 0.48]$ & $[-0.47, 0.48]$  \\
        &         &     & 0.5 & $[-0.86, 0.86]$ & $[-0.86, 0.87]$  &  &     & 0.5 & $[-0.30, 0.49]$ & $[-0.38, 0.48]$  \\
        &         &     & 0.8 & $[-0.48, 0.48]$ & $[-0.47, 0.48]$  &  &     & 0.8 & $[-0.12, 0.49]$ & $[-0.12, 0.45]$  \\ \hline \hline
\end{tabular}
\end{table}

\end{document}